\title{Bidding in Smart Grid PDAs: Theory, Analysis and Strategy\break
\Large (Extended Version)}
\author{Susobhan Ghosh,\textsuperscript{1} Sujit Gujar,\textsuperscript{1} Praveen Paruchuri,\textsuperscript{1} Easwar Subramanian,\textsuperscript{2} Sanjay P. Bhat\textsuperscript{2}\\
\textsuperscript{1} Machine Learning Lab, IIIT Hyderabad, India \\
\textsuperscript{2} Tata Consultancy Services, TCS Innovation Labs, Hyderabad, India\\
\texttt{\small\{susobhan.ghosh@research.iiit.ac.in\}, \{sujit.gujar, praveen.p\}@iiit.ac.in}\\ \texttt{\small\{easwar.subramanian, sanjay.bhat\}@tcs.com}}
\theoremstyle{plain}
\newtheorem{theorem}{Theorem}
\theoremstyle{definition}
\newtheorem{definition}{Definition}
\newtheorem{claim}{Claim}
\newtheorem{proposition}{Proposition}
\newtheoremstyle{case}{}{}{}{}{\bfseries}{:}{ }{}
\theoremstyle{case}
\newtheorem{case}{Case}
\begin{document}

\maketitle

\begin{abstract}
Periodic Double Auctions (PDAs) are commonly used in the real world for trading, e.g. in stock markets to determine stock opening prices, and energy markets to trade energy in order to balance net demand in smart grids, involving trillions of dollars in the process. A bidder, participating in such PDAs, has to plan for bids in the current auction as well as for the future auctions, which highlights the necessity of good bidding strategies. In this paper, we perform an equilibrium analysis of single unit single-shot double auctions with a certain clearing price and payment rule, which we refer to as ACPR, and find it intractable to analyze as number of participating agents increase. We further derive the best response for a bidder with complete information in a single-shot double auction with ACPR. Leveraging the theory developed for single-shot double auction and  taking the PowerTAC wholesale market PDA as our testbed, we proceed by modeling the PDA of PowerTAC as an MDP. We propose a novel bidding strategy, namely MDPLCPBS. We empirically show that MDPLCPBS follows the equilibrium strategy for double auctions that we previously analyze. In addition, we benchmark our strategy against the baseline and the state-of-the-art bidding strategies for the PowerTAC wholesale market PDAs, and show that MDPLCPBS outperforms most of them consistently.
\end{abstract}

\section{Introduction}
Auctions are mechanisms which facilitate buying and selling of goods or items amongst a group of agents. Double auctions are prevalent when both the sides of a market actively bid. For example, in the New York Stock Exchange, opening prices are determined using double auctions \cite{parsons2011auctions}. In smart grids, multiple power generating companies and different distributing agencies (brokers) trade electricity in the wholesale markets using double auctions.

In this work, we focus primarily on electricity markets. In July 2019, approximately 1.2 Billion Euros worth electricity was traded in Nord Pool alone, with 52\% of the volume being traded using APIs \cite{nordpool}. Any small improvement in cost optimization by deploying better bidding strategies can lead to significant improvements in the profits of the distributing agencies. Motivated by this, we take up a formal game-theoretic approach in this work for devising bidding strategies. 

Typically, for double auctions, clearing price and payment rules differ from market to market. Equilibrium analysis of double auctions has been explored extensively with different payment and clearing price rules \cite{wilson1992strategic}. Specifically, for $k$-double auctions, \citeauthor{satterthwaite1989bilateral} (\citeyear{satterthwaite1989bilateral}) proved the existence of multiple non-trivial equilibria for $k \in [0, 1]$. They also focused on a class of well-behaved equilibria, by making generalist assumptions on buyer's and seller's bidding strategies. Our focus in this paper is \emph{Average Clearing Price Rule (ACPR) based Periodic Double Auctions (PDAs)}, commonly used in smart grids (Power TAC \cite{ketter20172018}). In ACPR, the clearing price set as the average of last executing bid and last executing ask (a special case of $k$-double auction with $k$ = 0.5).

For ACPR, \citeauthor{chatterjee1983bargaining} (\citeyear{chatterjee1983bargaining}) constructed a symmetric equilibrium for the case of one buyer and one seller with uniformly distributed valuations. However, in the vast literature of double auctions, a generic equilibrium analysis for ACPR with more buyers has not been well explored \cite{wilson1992strategic}. We take up a double auction with ACPR as a case study. We assume all the agents involved (buyers and sellers) deploy scaling based strategies, and identify the Nash Equilibrium (NE) of the induced game. Researchers have used fictitious play-based convergence to equilibrium (e.g., \cite{shi2010equilibrium}) in double auctions. However, such strategies are not useful in PDAs when the agents need to place bids in real-time for new auctions. In such settings, we believe scaling based strategies are easy to interpret and implement. The equilibrium analysis of non-linear or other complex forms are analytically difficult to compute; moreover may not be appealing to the real users of these markets. 

We characterize NEs for One Buyer and One Seller (OBOS) and Two Buyer and One Seller (TBOS) analytically (Theorem \ref{OBOSThm} and \ref{TBOSThm}). Given our assumption of scaling based bidding strategies and uniform type distributions, generic equilibrium analysis of double auctions, following ACPR, beyond these settings is challenging. 
To test our double auction strategies, we take the help of the PowerTAC simulation environment. PowerTAC is a simulation platform that replicates crucial elements of the smart grid, where multiple distributing agencies (brokers) compete across markets to generate the most profit. Note that, the double auctions in PowerTAC; for that matter actually in electricity markets; are PDAs. In PDAs, the market clears multiple times, each after a specific time interval.

Now, if a buyer knows all the bids in a double auction, we argue that it is a best response for the buyer to bid as close as possible to the last clearing bid in order to procure the full required energy (Proposition \ref{prop:LCP}). However, in reality, buyers never have access to such information. To address this incomplete information, we model the bidding process in PowerTAC PDAs as a Markov Decision Process (MDP), and solve it using dynamic programming and Last Clearing Price (LCP) prediction. Motivated by Power TAC's fast response time constraints, we propose a PDA bidding strategy \emph{MDPLCPBS} (Algorithm \ref{MDPLCPBS}). Though our MDP formulation is inspired by \citeauthor{urieli2014tactex} (\citeyear{urieli2014tactex}), the novelty lies in the reward, solution, and application to place bids. First, we illustrate that the MDP based strategy indeed achieves the equilibrium strategy characterized for OBOS setting. Then, we conduct different experiments to compare MDPLCPBS with the following strategies: ZI \cite{gode1993allocative}, ZIP \cite{tesauro2001high}, TacTex \cite{urieli2014tactex}, and MCTS \cite{chowdhury2018bidding}. Our analysis shows that MDPLCPBS outperforms ZI, TacTex, and ZIP in all the cases, and closely matches with MCTS. Simultaneously, we show that it predicts the LCP with minimal error. We used this bidding strategy to great effect during PowerTAC 2018 Finals \cite{passghosh} \cite{Ghosh2019}.


In summary, our contributions are as follows:
\begin{itemize}
    \item We analytically characterize NE strategies for OBOS and TBOS settings (Theorem \ref{OBOSThm} and Theorem \ref{TBOSThm}).
    \item We propose the best response in a complete information multi-unit double auction.
    \item For bidding in PDAs such as PowerTAC, we design an algorithm MDPLCPBS (Algorithm \ref{MDPLCPBS}). It is based on dynamic programming and LCP prediction.
    \item Experimentally, we validate that MDPLCPBS achieves the equilibrium characterized for OBOS setting. Further, we demonstrate its efficacy against state of the art strategies for PowerTAC, and also show that it predicts the LCP with minimal error.
\end{itemize}{}

\section{Definitions \& Background}
We first define all the required terms formally.

\begin{definition}(\emph{Periodic Double Auction (PDA)}) A type of auction, for buying and selling some resource, with multiple discrete clearing periods i.e. clearing after a specific time interval. Potential buyers submit their bids and potential sellers simultaneously submit their asks to an auctioneer. Then the auctioneer matches the bids and asks, and chooses some \emph{clearing price}, denoted as $CP$, that clears the auction \cite{wurman1998flexible}. The \emph{allocation rule} determines the quantity bought/sold by each buyer/seller, while the \emph{payment rule} determines how much each buyer/seller pays/earns for buying/selling that quantity.
\end{definition}

\begin{definition}(\emph{Last Clearing Bid/Ask (LCB/LCA)}) Last Clearing Bid (Ask) of an auction refers to that partially or fully cleared bid (ask) which has the lowest (highest) limit-price. It is referred to as ``last clearing" since it is the last bid (ask) to be cleared by the clearing mechanism of the auction.
\end{definition}

\begin{definition}(\emph{Last Clearing Price (LCP)}) Last Clearing Price (LCP) of bids (asks) refers to the limit-price of the Last Clearing Bid (Ask).
\end{definition}

\begin{definition}(\emph{The k-Double Auction}) If a buyer and seller participate in a double auction, and if the sealed bid $b$ by the buyer is higher than the sealed bid $s$ by the seller, then $CP$ is given by $kb + (1-k)s$ for some fixed $k \in [0,1]$. 
\end{definition}

\begin{definition}(\emph{Average Clearing Price Rule (ACPR)})
In a double auction, the clearing price and payment rule is \emph{ACPR} if the clearing price is given by $(b + s)/2$ where $b$ is the last executed bid, and $s$ is the last executed ask. It is a special case of k-double auction, with $k = 0.5$.
\end{definition}

\sloppy Consider a game $\Gamma$ $=$ $\langle N, (S_i)_{i \in N}$, $(u_i)_{i \in N}\rangle$, where $N = \{1, 2, \ldots, n\}$ is the set of players, $S_i$ is the strategy set of the player $i$, and $u_i : S_1 \times S_2 \times \ldots \times S_n \to \mathbb{R}$ for $i = 1, 2, \ldots n$ are utility functions.

\begin{definition}(\emph{Best Response})
\sloppy Given a game $\Gamma$, the best response correspondence for player $i$ is the mapping $B_i: S_{-i} \to S_i$ defined by $B_i(s_{-i}) = \{s_i \in S_i: u_i(s_i, s_{-i}) \geq u_i(s_i', s_{-i}) \forall s_i' \in S_i\}$. That is, given a profile $s_{-i}$ of strategies of the other players, $B_i(s_{-i})$ gives the set of all best response strategies of player $i$. 
\end{definition}

\begin{definition}(\emph{Nash Equilibrium})
\sloppy Given a game $\Gamma$, a strategy profile $s^{*} = (s^*_1, s^*_2, \ldots, s^*_n)$ is said to be a Nash Equilibrium of $\Gamma$ if, $u_i(s^{*}_i, s^{*}_{-i}) \geq u_i(s_i, s^{*}_{-i}) \forall s_i \in S_i, \forall i = 1, 2, \ldots, n$. That is, each player's Nash Equilibrium strategy is a best response to the Nash Equilibrium strategies of the other players. 
\end{definition}

\begin{definition}(\emph{Markov Decision Process (MDP)} \cite{puterman1994markov})
A Markov Decision Process (MDP) is a tuple given by $M = (S, A, P, r, \gamma)$ where $S$ is the set of states, $A$ is the set of actions, $P$ is the state transition probability function, where $P(s'|s,a) = P(s_{t+1} = s'|s_t = s, a_t = a)$ is the probability that action $a$ in state $s$ at time $t$ will lead to state $s'$ at time $t+1$, $r$ is the reward function, with $r(s,a)$ denoting the reward obtained by taking action $a$ in state $s$, and $\gamma \in [0, 1]$ is the discount factor.
\end{definition}

\paragraph{PowerTAC} In this work, we focus more on smart grids. The Power Trading Agent Competition (PowerTAC) \cite{ketter20172018} environment simulates a smart grid for approximately 60 days, where multiple brokers compete against each other across three markets - tariff, wholesale and balancing market - to generate the most profit. Each broker maintains a portfolio of consumers and producers, and buys and sells energy in the wholesale market. The broker with the highest bank balance at the end of the simulation, wins the game. We use the PowerTAC simulator to benchmark our bidding strategy. 

The PowerTAC wholesale market employs PDAs for wholesale market energy trading. The \emph{clearing price and payment rule} for the PowerTAC PDA, is given by $ACPR$. Three types of entities participate in these auctions - (1) Generating Companies (GenCos), (2) Miso Buyer, and (3) PowerTAC brokers. GenCos place only asks to sell energy, while the Meso Buyer places very low bid prices to buy energy. The PowerTAC brokers are free to place a bid or an ask depending on their requirement, or not place a bid at all. 

The brokers can always participate in 24 auctions to trade energy, one auction for each of the next 24 timeslots. Each broker is notified about identity of other brokers participating in the PDAs at the beginning of the simulation. Each broker estimates its own energy requirement, and knows its own type. However, it does not know the types and requirements of the competing brokers. Every broker is allowed to submit an unlimited number of bids for each auction. After clearance, the clearing price and total cleared quantity of the auction is made public to all the brokers, while the last cleared bid or ask is not revealed. Additionally, each broker is privately notified about the cleared quantity and clearing price of any of its cleared bids/asks. The orderbook of the auction, which is the set of uncleared bids and asks without identity of the bidders, is also made public to all the brokers. If a broker fails to balance its retail demand portfolio after all the 24 auctions in the wholesale market, the balancing market automatically supplies the energy while charging the broker a $balancing\text{-}price$ for its imbalance. The $balancing\text{-}price$ is comparatively higher than the wholesale market price, and is meant to penalize the broker for having an imbalance. For more details about the PowerTAC simulation, we refer the reader to the Power TAC 2018 Game Specification \cite{ketter20172018}.

\section{Related Work}
Most bidding strategies for double auctions are designed for \emph{Continuous Double Auctions} (CDAs) and would need to be modified for PDAs. Bidding strategies for PDAs, outside PowerTAC, are very limited. \citeauthor{wah2016strategic} (\citeyear{wah2016strategic}) showed that in equilibrium, slow traders have higher welfare compared to fast traders in PDAs. As for bidding strategies for the PowerTAC wholesale market, AstonTAC \cite{kuate2013intelligent} uses Non-Homogeneous Hidden Markov Models (NHHMM) to predict energy demand and clearing price, which are then fed to an MDP to determine bid prices. TacTex \cite{urieli2014tactex,urieli2016tactex,urieli2016mdp} uses an MDP and dynamic programming based strategy derived from \citeauthor{Tesauro}'s bidding strategy to predict bid prices, which is the motivation for our MDP-based strategy. \citeauthor{chowdhury2016predicting} \citeyear{chowdhury2016predicting} predicts bid prices for the wholesale market PDAs using REPTree, Linear Regression and NN with weather data, with the former being is used in the SPOT \cite{Chowdhury2017} broker. \citeauthor{chowdhury2018bidding} \citeyear{chowdhury2018bidding} use a Monte Carlo Tree Search (MCTS) based strategy coupled with a REPTree based price predictor \cite{chowdhury2016predicting} and heuristics, to determine optimal bid prices. AgentUDE \cite{ozdemir2015agentude} uses an adaptive Q-learning based strategy in the wholesale market. None of these strategies are backed up by game theoretic analysis, where as our work is to build strategies derived from Nash Equilibrium.

\section{Theoretical Approach and Proofs}
\label{section:theory}
In this section, we focus solely on the best response and Nash Equilibrium analysis of double auctions.

\subsection{Nash Equilibrium analysis in single unit Double Auctions}
Consider a single unit double auction, with the \emph{clearing price and payment rule} given by \emph{ACPR}. To find a generic Nash Equilibrium in this setting, we first try to simplify the double auction by restricting the number of buyers and sellers and their behavior. Upon doing so, we derive the following case-wise results.

\subsubsection{One buyer and One Seller (OBOS)}
\label{section:OBOS}
Let's assume that one buyer and one seller participate in the double auction, with their types as $\theta_B$ and $\theta_S$ respectively. We assume that both deploy scaling based strategies, i.e., a bid by a buyer is $b_B = \alpha_B \theta_B$ and an ask by the seller is $b_S = \alpha_S \theta_S$ where $\alpha_B$ and $\alpha_S$ are the scale factors by which the buyer and seller scale their true types while bidding, respectively. Motivated by the literature {\cite{rothkopf1980equilibrium} \cite{vincent1995bidding} \cite{narahari2014game}}, we choose scale based bidding strategies for this Nash Equilibrium analysis, as compared to additive bidding strategies.

We assume $\theta_B\sim U[l_B, h_B]$ and $\theta_S \sim U[l_S,h_S]$ and this is common knwoledge. We also assume Equation \eqref{eqn:OBOSassumption}, which states that the buyer's bid (seller's ask) at any point will be less (higher) than or equal to the highest (lowest) possible seller's ask (buyer's bid).
{
\begin{equation}
	\frac{\alpha_B}{\alpha_S}\theta_B \leq h_S, \qquad \frac{\alpha_S}{\alpha_B}\theta_S \geq l_B
  	\label{eqn:OBOSassumption}
\end{equation}
}
Thus, the utility of the buyer if its bid gets cleared, is denoted by the difference of true valuation and clearing price. Given the true types are picked over a distribution, the expected utility is computed as:
{
    \begin{equation}
    \begin{aligned}
    u_B &= \int_{l_S}^{\frac{\alpha_B}{\alpha_S}\theta_B} \big[ \theta_B - \big( \frac{\alpha_B \theta_B + \alpha_S \theta_S}{2} \big) \big] d\theta_S \\
    & = \int_{l_S}^{\frac{\alpha_B}{\alpha_S}\theta_B} \big[ \big( 1 - \frac{\alpha_B}{2} \big) \theta_B - \frac{\alpha_S \theta_S}{2} \big] d\theta_S \\
    & = \theta_B \big( 1 - \frac{\alpha_B}{2} \big) \big( \frac{\alpha_B}{\alpha_S}\theta_B - l_S \big) - \frac{\alpha_S}{4} \big[\big( \frac{\alpha_B}{\alpha_S}\theta_B\big)^{2} - l_S^{2} \big] \\
    \end{aligned}
    \end{equation}
}
Now assuming that the buyer decides to fix its $\alpha_B$ before even seeing its own type, then its utility is given by:
{
\begin{equation}
\begin{aligned}
U_B &= \int_{l_B}^{h_B} u_B d\theta_B \\
&= \int_{l_B}^{h_B} \bigg[ \theta_B \big( 1 - \frac{\alpha_B}{2} \big) \big( \frac{\alpha_B}{\alpha_S}\theta_B - l_S \big) - \frac{\alpha_S}{4} \big[\big( \frac{\alpha_B}{\alpha_S}\theta_B\big)^{2} - l_S^{2} \big] \bigg] d\theta_B \\
& = \big( \frac{h_B^3 - l_B^3}{3} \big) \big(\frac{\alpha_B}{\alpha_S} - \frac{3\alpha_B^2}{4\alpha_S} \big)- l_S \big( 1 - \frac{\alpha_B}{2} \big)\big( \frac{h_B^2 - l_B^2}{2} \big) + \frac{\alpha_S}{4}l_S^2\big(h_B - l_B \big)\\
\end{aligned}
\end{equation}
}
Now, differentiating w.r.t. $\alpha_B$ and equating to 0 to find maxima:
{
\begin{equation}
\begin{aligned}
\frac{\partial U_B}{\partial \alpha_B} &= 0\\
  \Rightarrow\quad
  & \big( \frac{h_B^3 - l_B^3}{3} \big) \big(\frac{1}{\alpha_S} - \frac{3\alpha_B}{2\alpha_S} \big) + l_S\big( \frac{h_B^2 - l_B^2}{4} \big)
  = 0 \\
  \Rightarrow\quad
  & \alpha_B
  = \frac{2}{3} + \frac{\alpha_S l_S}{2}\big( \frac{h_B^2 - l_B^2}{h_B^3 - l_B^3} \big)
\end{aligned}
\label{eqn:OBOS-alphaB-Pre}
\end{equation}
}

Similarly, for the seller, the utility comes out to be:
{
\begin{equation}
\begin{aligned}
u_S &= \int_{\frac{\alpha_S}{\alpha_B}\theta_S}^{h_B} \big[ \big( \frac{\alpha_B \theta_B + \alpha_S \theta_S}{2} \big) - \theta_S \big] d\theta_B \\
& = \int_{\frac{\alpha_S}{\alpha_B}\theta_S}^{h_B} \big[\frac{\alpha_B}{2} \theta_B + (\frac{\alpha_S }{2} - 1)\theta_S \big] d\theta_B \\
& = \frac{\alpha_B}{4} \big( h_B^2 - (\frac{\alpha_S}{\alpha_B}\theta_S)^2 \big) + \theta_S (\frac{\alpha_S }{2} - 1) \big( h_B - \frac{\alpha_S}{\alpha_B}\theta_S \big) \\
\end{aligned}
\end{equation}
}

Again, assuming that the seller decides to fix its $\alpha_S$ before even seeing its own type, then its utility is given by:
{
\begin{equation}
\begin{aligned}
U_S &= \int_{l_S}^{h_S} u_S d\theta_S\\
& = \frac{\alpha_B h_B^2}{4} (h_S^2 - l_S^2) - \frac{\alpha_S^2}{12\alpha_B^2}(h_S^3 - l_S^3) + h_B(\frac{\alpha_S }{2} - 1)(\frac{h_S^2 - l_S^2}{2}) - (\frac{\alpha_S }{2} - 1)\frac{\alpha_S}{\alpha_B}(\frac{h_S^3 - l_S^3}{3})
\end{aligned}
\end{equation}
}
Now, differentiating w.r.t $\alpha_S$ and equating to 0 to find maxima
{
\begin{equation}
\begin{aligned}
\frac{\partial U_S}{\partial \alpha_S} &= 0\\
  \Rightarrow\quad
  &(\frac{h_S^3 - l_S^3}{3}) \big[-\frac{\alpha_S}{2\alpha_B} + \frac{1}{\alpha_B} - \frac{\alpha_S}{\alpha_B} \big] + h_B (\frac{h_S^2 - l_S^2}{4}) = 0 \\
  \Rightarrow\quad
  &\alpha_S = \frac{2}{3} + \frac{\alpha_B h_B}{2} (\frac{h_S^2 - l_S^2}{h_S^3 - l_S^3})\\
\end{aligned}
\end{equation}
}
Next, simplifying the expressions for  $\alpha_B$ and $\alpha_S$ by letting $\frac{h_B^2 - l_B^2}{h_B^3 - l_B^3}$ = $x$ and $\frac{h_S^2 - l_S^2}{h_S^3 - l_S^3}$ = $y$, we get
{
  \begin{equation}
    \begin{aligned}
        &\alpha_S = \frac{2}{3} + \frac{\alpha_B h_B y}{2}\\
      \Rightarrow\quad
        &\alpha_S = \frac{2}{3} + \frac{h_B y}{2}(\frac{2}{3} + \frac{\alpha_S l_S x}{2})\\
      \Rightarrow\quad
        &\alpha_S = \frac{4}{3} \bigg( \frac{2 + h_B y}{4 - l_S h_B x y} \bigg)
    \end{aligned}
    \label{eqn:OBOS-alphaS}
  \end{equation}
  \begin{equation}
    \begin{aligned}
        &\alpha_B = \frac{2}{3} + \frac{\alpha_S l_S x}{2}\\
      \Rightarrow\quad
        &\alpha_B = \frac{2}{3} + \frac{l_S x}{2}\frac{4}{3}\bigg( \frac{2 + h_B y}{4 - l_S h_B x y} \bigg)\\
      \Rightarrow\quad
        &\alpha_B = \frac{4}{3} \bigg( \frac{2 + l_S x}{4 - l_S h_B x y} \bigg)
    \end{aligned}
    \label{eqn:OBOS-alphaB}
  \end{equation}
}

Putting $l_S$ = $l_B$ = $0$ and $h_S$ = $h_B$ = $1$ in Equations \eqref{eqn:OBOS-alphaS} and \eqref{eqn:OBOS-alphaB}, we get $\alpha_S$ = $1$ and $\alpha_B$ = $\frac{2}{3}$. The above discussion is summarized as the following theorem.

\begin{theorem}
\label{OBOSThm}
For a single unit double auction with  \emph{ACPR}, with only one buyer and one seller, whose true types are drawn from a $0-1$ uniform distribution, if they deploy scaling based bidding strategies $b_B$ and $b_S$ which satisfy Equation \eqref{eqn:OBOSassumption} and fix their scaling factors $\alpha_B$ and $\alpha_S$ before seeing their true types, then $\alpha_S$ = $1$ and $\alpha_B$ = $\frac{2}{3}$ constitute a Nash Equilibrium.
\end{theorem}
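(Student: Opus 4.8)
The plan is to verify the Nash Equilibrium definition directly: show that $\alpha_B=\tfrac23$ is a best response to $\alpha_S=1$ and, symmetrically, that $\alpha_S=1$ is a best response to $\alpha_B=\tfrac23$. Since the two players' decision problems are structurally identical (swap the roles of ``bid above ask'' and ``ask below bid''), I would carry out the buyer's computation in full and obtain the seller's by symmetry. First I would fix an arbitrary seller scale factor $\alpha_S$ and, for a buyer of type $\theta_B$, identify the event that the buyer's bid clears: with a single unit, the bid $\alpha_B\theta_B$ clears against the ask $\alpha_S\theta_S$ exactly when $\theta_S\le\tfrac{\alpha_B}{\alpha_S}\theta_B$. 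Equation \eqref{eqn:OBOSassumption} guarantees that $\tfrac{\alpha_B}{\alpha_S}\theta_B\le h_S$, so this clearing region is the interval $[l_S,\tfrac{\alpha_B}{\alpha_S}\theta_B]$ and the interim expected utility $u_B(\theta_B)$ is the integral of the realized surplus $\theta_B-\tfrac{\alpha_B\theta_B+\alpha_S\theta_S}{2}$ over that interval against the $U[l_S,h_S]$ density — which is exactly the displayed expression for $u_B$.

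Next, because $\alpha_B$ must be chosen before $\theta_B$ is revealed, I would integrate $u_B(\theta_B)$ against the $U[l_B,h_B]$ density to obtain the ex ante objective $U_B(\alpha_B;\alpha_S)$, then take $\partial U_B/\partial\alpha_B=0$. This is a routine polynomial computation and yields the buyer's best-response map $\alpha_B=\tfrac23+\tfrac{\alpha_S l_S}{2}\,x$ with $x=\tfrac{h_B^2-l_B^2}{h_B^3-l_B^3}$, as in \eqref{eqn:OBOS-alphaB-Pre}; the seller's best-response map $\alpha_S=\tfrac23+\tfrac{\alpha_B h_B}{2}\,y$ with $y=\tfrac{h_S^2-l_S^2}{h_S^3-l_S^3}$ follows by the analogous argument. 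A Nash Equilibrium is a simultaneous fixed point of these two affine maps, so I would substitute one into the other and solve the resulting linear equation, obtaining the closed forms \eqref{eqn:OBOS-alphaS} and \eqref{eqn:OBOS-alphaB}. Specializing to $l_B=l_S=0$, $h_B=h_S=1$ gives $x=y=1$, hence $\alpha_B=\tfrac23$ and $\alpha_S=1$.

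Two checks remain, and I expect these to be the delicate part rather than the calculus. First, the first-order condition only locates a critical point, so I would compute $\partial^2 U_B/\partial\alpha_B^2$ and $\partial^2 U_S/\partial\alpha_S^2$ and confirm they are negative (one finds $\partial^2 U_B/\partial\alpha_B^2 = -\tfrac{h_B^3-l_B^3}{2\alpha_S}<0$, and symmetrically for the seller), so that each player's objective is strictly concave in its own scale factor and the critical point is the unique maximizer over the region where the utility formula is valid. Second — and this is the main obstacle — the utility formulas were derived under the standing assumption \eqref{eqn:OBOSassumption}, so I must confirm that the candidate equilibrium actually lies in the regime where that assumption holds (otherwise the integration limits, and hence the whole optimization, would be based on the wrong expression). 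With $\alpha_B=\tfrac23$, $\alpha_S=1$ and $l_B=l_S=0$, $h_B=h_S=1$ this is immediate: $\tfrac{\alpha_B}{\alpha_S}\theta_B=\tfrac23\theta_B\le 1=h_S$ and $\tfrac{\alpha_S}{\alpha_B}\theta_S=\tfrac32\theta_S\ge 0=l_B$ for all admissible types. I would also note that a unilateral deviation to an $\alpha_B$ outside this regime can only shrink the effective clearing interval (it is capped at $h_S$), so it cannot beat the interior optimum; this closes the argument that no profitable deviation exists and the pair $(\alpha_B,\alpha_S)=(\tfrac23,1)$ is a Nash Equilibrium.
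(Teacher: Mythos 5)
Your proposal follows essentially the same route as the paper: compute the interim expected utility over the clearing region $[l_S,\tfrac{\alpha_B}{\alpha_S}\theta_B]$, integrate over the bidder's own type, take first-order conditions to get the two affine best-response maps $\alpha_B=\tfrac23+\tfrac{\alpha_S l_S}{2}x$ and $\alpha_S=\tfrac23+\tfrac{\alpha_B h_B}{2}y$, solve the fixed point, and specialize to $U[0,1]$. Your two additional checks (strict concavity via $\partial^2U_B/\partial\alpha_B^2=-\tfrac{h_B^3-l_B^3}{2\alpha_S}<0$, and verifying that the candidate pair actually satisfies Equation \eqref{eqn:OBOSassumption}) are correct and are steps the paper leaves implicit, so they strengthen rather than diverge from its argument.
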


\subsubsection{Two Buyers and One Seller (TBOS)}
Let's assume that two buyers B1 and B2, and one seller participate in the double auction, with types $\theta_{B1}$, $\theta_{B2}$ and $\theta_{S}$ respectively. We assume that all deploy scaling based strategies, and both buyers have the same scaling factor $\alpha_B$. Thus, a bid by buyer B1 is $b_{B1} = \alpha_B \theta_{B1}$ and by buyer B2 is $b_{B2} = \alpha_B \theta_{B2}$, while a bid by the seller is $b_S = \alpha_S \theta_{S}$. We also assume Equation \eqref{eqn:TBOSassumption}, which states that the first buyer's (seller's) bid at any point will be less than or equal to the highest possible seller's (buyer's) bid.
{
\begin{equation}
	\frac{\alpha_{B}}{\alpha_S}\theta_{B1} \leq h_S, \qquad \frac{\alpha_{S}}{\alpha_B}\theta_S \leq h_B
    \label{eqn:TBOSassumption}
\end{equation}
}
First, we find the utility of the first buyer. We consider the following cases:

\begin{enumerate}
  \item $b_{B1} \geq b_{B2} \Rightarrow \theta_{B1}\geq \theta_{B2}$ and  $b_{B2} \geq b_S$ $\Rightarrow$ $\theta_{S} \leq \frac{\alpha_B}{\alpha_S}\theta_{B2}$
  
  Let the utility in this case be denoted by $u_{b11}$.
{
	\fontsize{9}{10}
    \begin{equation}
        \begin{aligned}
        u_{b11} &= \int_{l_B}^{\theta_{B1}} \bigg[ \int_{l_S}^{\frac{\alpha_B}{\alpha_S}\theta_{B2}} \big[ \theta_{B1} - \big( \frac{\alpha_B \theta_{B2} + \alpha_S \theta_S}{2} \big) \big] d\theta_S \bigg] d\theta_{B2} \\
        & = \int_{l_B}^{\theta_{B1}} \bigg[ - \frac{3\alpha_B^2}{4\alpha_S} \theta_{B2}^2  + (\frac{\alpha_B}{\alpha_S}\theta_{B1} + \frac{\alpha_B l_S}{2})\theta_{B2} + (-l_S \theta_{B1} + \frac{l_S^2 \alpha_S}{4})  \bigg] d\theta_{B2} \\
        & = \theta_{B1}^3 (-\frac{\alpha_B^2}{4\alpha_S} + \frac{\alpha_B}{2\alpha_S}) + \theta_{B1}^2 (\frac{\alpha_B l_S}{4} - l_S) + \theta_{B1} (-\frac{\alpha_B l_B^2}{2\alpha_S} + l_S l_B + \frac{l_S^2 \alpha_S}{4}) + (\frac{\alpha_B^2 l_B^3}{4\alpha_S} - \frac{\alpha_B l_S l_B^2}{4} - \frac{\alpha_S l_B l_S^2}{4})
        \end{aligned}
    \end{equation}
}

  \item $b_{B1} \geq b_{S} \Rightarrow \theta_{B1}\geq \frac{\alpha_S}{\alpha_B}\theta_{S}$  and $b_{B2} \leq b_S \Rightarrow \theta_{B2} \leq \frac{\alpha_S}{\alpha_B}\theta_{S}$
  
  Let the utility in this case be denoted by $u_{b12}$.
{
	\fontsize{9}{10}
    \begin{equation}
        \begin{aligned}
        u_{b12} &= \int_{l_S}^{\frac{\alpha_B}{\alpha_S}\theta_{B1}} \bigg[ \int_{l_B}^{\frac{\alpha_S}{\alpha_B}\theta_{S}} \big[ \theta_{B1} - \big( \frac{\alpha_B \theta_{B1} + \alpha_S \theta_S}{2} \big) \big] d\theta_{B2}  \bigg] d\theta_S\\
        &= \int_{l_S}^{\frac{\alpha_B}{\alpha_S}\theta_{B1}} \bigg[ \theta_{B1} (1 - \frac{\alpha_B}{2})(\frac{\alpha_S}{\alpha_B}\theta_{S} - l_B)- \frac{\alpha_S}{2}(\frac{\alpha_S}{\alpha_B}\theta_{S} - l_B)\theta_S \bigg] d\theta_S\\
        &= \theta_{B1}^3(-\frac{5\alpha_B^2}{12\alpha_S} + \frac{\alpha_B}{2\alpha_S}) + \theta_{B1}^2 l_B(\frac{3\alpha_B^2}{4\alpha_S} - \frac{\alpha_B}{\alpha_S}) + \theta_{B1} ( - \frac{\alpha_S l_S^2}{2\alpha_B} + \frac{\alpha_S l_S^2}{4} + l_B l_S - \frac{\alpha_B l_B l_S}{2}) + (\frac{\alpha_S^2 l_S^3}{6} - \frac{\alpha_S l_B l_S^2}{4})
        \end{aligned}
    \end{equation}
}
\end{enumerate}

Now assuming that the first buyer decides to fix its $\alpha_B$ before even seeing its own type, then we find the utility to be - 
{
    \begin{equation}
      \begin{aligned}
        U_{B1} &= \int_{l_B}^{h_B} u_{B1} d\theta_{B1} = \int_{l_B}^{h_B} (u_{b11} + u_{b12}) d\theta_{B1}\\
        &= \int_{l_B}^{h_B} \bigg[ \theta_{B1}^3(-\frac{2\alpha_B^2}{3\alpha_S} + \frac{\alpha_B}{\alpha_S})  + \theta_{B1}^2 (\frac{\alpha_B l_S}{4}  - \frac{\alpha_B l_B}{\alpha_S} +\frac{3\alpha_B^2 l_B}{4 \alpha_S} - l_S) \\
        &+ \theta_{B1} \bigg(2 l_B l_S -\frac{\alpha_B l_B^2}{2\alpha_S} - \frac{\alpha_S l_S^2}{2\alpha_B} + \frac{\alpha_S l_S^2}{2} - \frac{\alpha_B l_B l_S}{2}\bigg) + (\frac{\alpha_B^2 l_B^3}{4\alpha_S} - \frac{\alpha_B l_S l_B^2}{4} - \frac{\alpha_S l_B l_S^2}{2 } + \frac{\alpha_S^2 l_S^3}{6} ) \bigg] d\theta_{B1}\\
      \end{aligned}
    \end{equation}
}
Now, differentiating w.r.t $\alpha_B$ and equating to 0 to find maxima
{
    \begin{equation}
        \begin{aligned}
            &\frac{\partial U_{B1}}{\partial \alpha_B} = 0\\
           	\Rightarrow\quad 
            &\bigg[ (h_B - l_B) \big( \frac{\alpha_B l_B^3}{2\alpha_S} - \frac{l_S l_B^2}{4}\big) + (\frac{h_B^2 - l_B^2}{2}) \bigg(-\frac{l_B^2}{2\alpha_S} + \frac{\alpha_S l_S^2}{2\alpha_B^2} - \frac{l_B l_S}{2} \bigg)\\
            & + (\frac{h_B^3 - l_B^3}{3}) \bigg(\frac{l_S}{4} - \frac{l_B}{\alpha_S} +\frac{3\alpha_B l_B}{2 \alpha_S}\bigg) + (\frac{h_B^4 - l_B^4}{4})(-\frac{4\alpha_B}{3\alpha_S} + \frac{1}{\alpha_S})\bigg]
            = 0 \\
            \Rightarrow\quad 
            &\bigg[ \frac{\alpha_B}{6\alpha_S}(-2h_B^3 + 4l_B^3 + l_B^2 h_B - 2l_Bh_B^2) + \frac{\alpha_S}{4\alpha_B^2}(h_B l_S^2 + l_B l_S^2) \\
            &+ \frac{l_S(h_B^2 -2 l_B^2 - 2h_Bl_B)}{12} + \frac{3h_B^3 - 4l_B^3 - 4h_B l_B^2 - l_B h_B^2}{12 \alpha_S} \bigg]
            = 0 \\
        \end{aligned}
        \label{eqn:TBOS-Buyer1}
    \end{equation}
}
Similarly, for the seller we find the utility. We again have 4 cases:

\begin{enumerate}
    \item $b_{B1} \geq b_{B2} \Rightarrow \theta_{B1}\geq \theta_{B2}$ and  $b_{B2} \geq b_S$ $\Rightarrow$ $\theta_{B2} \geq \frac{\alpha_S}{\alpha_B}\theta_{S}$
  
    Let the utility in this case be denoted by $u_{s1}$.
{
    \begin{equation}
        \begin{aligned}
        u_{s1} &= \int_{\frac{\alpha_S}{\alpha_B}\theta_{S}}^{h_B} \bigg[ \int_{\theta_{B2}}^{h_B} \big[ \big( \frac{\alpha_B \theta_{B2} + \alpha_S \theta_S}{2} \big) - \theta_{S} \big] d\theta_{B1} \bigg] d\theta_{B2}\\
        &= \int_{\frac{\alpha_S}{\alpha_B}\theta_{S}}^{h_B} \bigg[ \theta_S (\frac{\alpha_S}{2} - 1)(h_B - \theta_{B2}) + \frac{\alpha_B}{2} \theta_{B2}(h_B - \theta_{B2}) \bigg] d\theta_{B2}\\
        &= \theta_S^3 (\frac{5\alpha_S^3}{12\alpha_B^2} - \frac{\alpha_S^2}{2\alpha_B^2}) + \theta_S^2 (-\frac{3\alpha_S^2 h_B}{4\alpha_B} + \frac{\alpha_S h_B}{\alpha_B}) + \theta_S (\frac{\alpha_S}{4} - \frac{1}{2})h_B^2 + (\frac{\alpha_B h_B^3}{12})
        \end{aligned}
    \end{equation}
}
    \item $b_{B2} \geq b_{B1}\Rightarrow \theta_{B2}\geq \theta_{B1}$ and $b_{B1} \geq b_S \Rightarrow \theta_{B1} \geq \frac{\alpha_S}{\alpha_B}\theta_{S}$
  
    Let the utility in this case be denoted by $u_{s2}$. Since the two buyers are symmetric, the utility in this case comes to be same as in case 1.
{
    \begin{equation}
        \begin{aligned}
        u_{s2} &= \theta_S^3 (\frac{5\alpha_S^3}{12\alpha_B^2} - \frac{\alpha_S^2}{2\alpha_B^2}) + \theta_S^2 (-\frac{3\alpha_S^2 h_B}{4\alpha_B} + \frac{\alpha_S h_B}{\alpha_B}) + \theta_S (\frac{\alpha_S}{4} - \frac{1}{2})h_B^2 + (\frac{\alpha_B h_B^3}{12})
        \end{aligned}
    \end{equation}
}
    \item $b_{B1} \geq b_S$ $\Rightarrow \theta_{B1}\geq \frac{\alpha_S}{\alpha_B}\theta_{S}$ and $b_{B2} \leq b_S \Rightarrow \theta_{B2} \leq \frac{\alpha_S}{\alpha_B}\theta_{S}$
  
    Let the utility in this case be denoted by $u_{s3}$.
{
    \begin{equation}
        \begin{aligned}
        u_{s3} &= \int_{l_B}^{\frac{\alpha_S}{\alpha_B}\theta_{S}} \bigg[ \int_{\frac{\alpha_S}{\alpha_B}\theta_{S}}^{h_B} \big[ \big( \frac{\alpha_B \theta_{B1} + \alpha_S \theta_S}{2} \big) - \theta_{S} \big] d\theta_{B1} \bigg] d\theta_{B2}\\
        &= \int_{l_B}^{\frac{\alpha_S}{\alpha_B}\theta_{S}} \bigg[ \theta_S(\frac{\alpha_S}{2} - 1)(h_B - \theta_S) + \frac{\alpha_B}{4}(h_B^2 - \theta_S^2)\bigg] d\theta_{B2}\\
        &= \theta_S^3 (\frac{\alpha_S^2}{\alpha_B^2} - \frac{3\alpha_S^3}{4\alpha_B^2}) + \theta_S^2 (\frac{\alpha_S^2 h_B }{2 \alpha_B} + \frac{3\alpha_S^2 l_B}{4\alpha_B} - \frac{\alpha_S h_B}{\alpha_B} - \frac{\alpha_S l_B}{\alpha_B}) + \theta_S (\frac{\alpha_S h_B^2}{4} - h_B l_B(\frac{\alpha_S}{2} - 1)) - \frac{\alpha_B h_B^2 l_B}{4}\\
        \end{aligned}
    \end{equation}
}
    \item $b_{B2} \geq b_S$ $\Rightarrow \theta_{B2}\geq \frac{\alpha_S}{\alpha_B}\theta_{S}$ and $b_{B1} \leq b_S \Rightarrow \theta_{B1} \leq \frac{\alpha_S}{\alpha_B}\theta_{S}$
  
    Let the utility in this case be denoted by $u_{s4}$. Since the two buyers are symmetric, the utility in this case comes to be same as in case 3.
{
    \begin{equation}
        \begin{aligned}
        u_{s4} &= \theta_S^3 (\frac{\alpha_S^2}{\alpha_B^2} - \frac{3\alpha_S^3}{4\alpha_B^2}) + \theta_S^2 (\frac{\alpha_S^2 h_B }{2 \alpha_B} + \frac{3\alpha_S^2 l_B}{4\alpha_B} - \frac{\alpha_S h_B}{\alpha_B} - \frac{\alpha_S l_B}{\alpha_B}) + \theta_S (\frac{\alpha_S h_B^2}{4} - h_B l_B(\frac{\alpha_S}{2} - 1)) - \frac{\alpha_B h_B^2 l_B}{4}\\
        \end{aligned}
    \end{equation}
}
\end{enumerate}

Now assuming that the seller decides to fix its $\alpha_S$ before even seeing its own type, then we find the utility to be - 
{
    \begin{equation}
        \begin{aligned}
            U_{S} &= \int_{l_S}^{h_S} u_{S} d\theta_{S} \\
            &= \int_{l_S}^{h_S} (u_{s1} + u_{s2} + u_{s3} + u_{s4}) d\theta_{S} = 2 \int_{l_S}^{h_S} (u_{s1} + u_{s3}) d\theta_{S}\\
            &= 2\int_{l_S}^{h_S} \bigg[ \theta_S^3 ( \frac{\alpha_S^2}{2\alpha_B^2} - \frac{\alpha_S^3}{3\alpha_B^2}) + \theta_S^2 (\frac{3\alpha_S^2 l_B}{4\alpha_B} - \frac{\alpha_S l_B}{\alpha_B} -\frac{\alpha_S^2 h_B}{4\alpha_B} ) \\
            &+ \theta_S (\frac{\alpha_S h_B^2}{2} - h_B l_B(\frac{\alpha_S}{2} - 1)  - \frac{h_B^2}{2})+ \frac{\alpha_B h_B^3}{12} - \frac{\alpha_B h_B^2 l_B}{4}\bigg] d\theta_{S} \\
        \end{aligned}
    \end{equation}
}

Now, differentiating w.r.t $\alpha_S$ and equating to 0 to find maxima
{
    \begin{equation}
        \begin{aligned}
        &\frac{\partial U_{S}}{\partial \alpha_S} = 0\\
            \Rightarrow\quad 
            &\bigg[ (\frac{h_S^4 - l_S^4}{4}) ( \frac{\alpha_S}{\alpha_B^2} - \frac{\alpha_S^2}{\alpha_B^2}) + (\frac{h_S^3 - l_S^3}{3}) (\frac{3\alpha_S l_B}{2\alpha_B} - \frac{l_B}{\alpha_B} -\frac{\alpha_S h_B}{2\alpha_B} ) + (\frac{h_S^2 - l_S^2}{2}) (-\frac{h_B l_B}{2} + \frac{h_B^2}{2}) \bigg]
            = 0 \\
            \Rightarrow\quad 
            &-\frac{\alpha_S^2}{4\alpha_B^2}(h_S^4 - l_S^4) + \frac{\alpha_S}{\alpha_B}( \frac{h_S^4 - l_S^4}{4\alpha_B} + \frac{(h_S^3 -l_S^3)(3l_B - h_B)}{6}) \\
            &- \frac{l_B}{\alpha_B}(\frac{h_S^3 -l_S^3}{3})+ (\frac{h_S^2 - l_S^2}{2}) (-\frac{h_B l_B}{2} + \frac{h_B^2}{2})
            = 0 \\
        \end{aligned}
        \label{eqn:TBOS-Seller}
    \end{equation}
}

From Equation {\eqref{eqn:TBOS-Buyer1}}, we have a bi-variate cubic equation in $\alpha_B$ and $\alpha_S$, and from Equation {\eqref{eqn:TBOS-Seller}}, we have a bi-variate quadratic equation in $\alpha_B$ and $\alpha_S$. 

Assuming $\alpha_S \neq 0$ and $\alpha_B \neq 0$ (non-zero bids), and putting $l_S$ = $l_B$ = $0$ and $h_S$ = $h_B$ = $1$ in Equation {\eqref{eqn:TBOS-Buyer1}}, we get

{
\begin{equation}
    \begin{aligned}
        &\frac{-4\alpha_B}{3\alpha_S} + \frac{1}{\alpha_S} = 0\\
        \Rightarrow\quad
        &\alpha_B = \frac{3}{4}\\
    \end{aligned}
    \label{eqn:TBOS-Buyer-Solved}
\end{equation}
}

Now, putting $\alpha_B = \frac{3}{4}$ (from Equation {\eqref{eqn:TBOS-Buyer-Solved}}), $l_S$ = $l_B$ = $0$ and $h_S$ = $h_B$ = $1$ in Equation {\eqref{eqn:TBOS-Seller}}, we get

{
\begin{equation}
    \begin{aligned}
        & -\alpha_S^2 + \frac{\alpha_S}{2} + \frac{9}{16} = 0\\
        \Rightarrow\quad
        &\alpha_S = \frac{1 \pm \sqrt{10}}{4}\\
    \end{aligned}
    \label{eqn:TBOS-Seller-Solved}
\end{equation}
}

Since $\alpha_S$ = $\frac{1 - \sqrt{10}}{4} < 0$ (negative scaling factor), we ignore this solution.

Thus, putting $l_S$ = $l_B$ = $0$ and $h_S$ = $h_B$ = $1$ in Equation \eqref{eqn:TBOS-Buyer1} and Equation \eqref{eqn:TBOS-Seller}, we get $\alpha_S$ = $\frac{1 + \sqrt{10}}{4} \approx 1.0406$ and $\alpha_B$ = $\frac{3}{4} = 0.75$.

The above discussion can be summarized as the following theorem.
\begin{theorem}
\label{TBOSThm}
For a single unit double auction with \emph{ACPR} with two buyers and one seller, whose true types are drawn from a $0-1$ uniform distribution, if they deploy scaling based strategies $b_{B1}$, $b_{B2}$ and $b_S$, with buyers having the same scaling factor $\alpha_B$, which satisfy Equation \eqref{eqn:TBOSassumption} and fix their scaling factors $\alpha_B$ and $\alpha_S$ before seeing their true types, then $\alpha_S = \frac{1 + \sqrt{10}}{4}$ and $\alpha_B = \frac{3}{4}$ constitute a Nash Equilibrium. 
\end{theorem}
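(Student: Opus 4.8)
The plan is to mirror the one--buyer analysis behind Theorem~\ref{OBOSThm}, but now carrying the extra buyer through the case decomposition. Fix the seller's ask at $b_S=\alpha_S\theta_S$ and buyer~$2$'s bid at $b_{B2}=\alpha_B\theta_{B2}$, and regard buyer~$1$ as choosing a scale $\alpha_{B1}$ (so $b_{B1}=\alpha_{B1}\theta_{B1}$). Buyer~$1$ obtains positive surplus exactly when its bid is the highest among the buyers \emph{and} clears the ask, so I would split that event by whether buyer~$2$'s bid also exceeds the ask: this yields the two regions $u_{b11}$ ($b_{B1}\ge b_{B2}\ge b_S$) and $u_{b12}$ ($b_{B1}\ge b_S>b_{B2}$) used above, with the ACPR clearing price obtained from the last executed bid and ask in each region. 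Integrating the integrand $\theta_{B1}-\mathrm{CP}$ against the independent uniform densities of $\theta_S$ and $\theta_{B2}$, and then against $\theta_{B1}$, expresses $U_{B1}$ as an explicit polynomial in $\alpha_{B1}$ with coefficients depending on $\alpha_B,\alpha_S$ and the supports. The seller's expected surplus is handled identically, except there are four ordering regions; since the two buyers are interchangeable the regions pair up, so $U_S=2\int_{l_S}^{h_S}(u_{s1}+u_{s3})\,d\theta_S$, which is precisely the reduction invoked above.

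The next step is the first--order conditions. Differentiating $U_{B1}$ in $\alpha_{B1}$ and $U_S$ in $\alpha_S$ and setting the derivatives to zero produces Equations~\eqref{eqn:TBOS-Buyer1} and \eqref{eqn:TBOS-Seller}: a bivariate cubic in $(\alpha_B,\alpha_S)$ from the buyer side and a bivariate quadratic from the seller side. Specialising to $l_B=l_S=0$, $h_B=h_S=1$ is what makes the system tractable: every term of the buyer condition that carries a factor $l_B$ or $l_S$ drops, leaving $-\tfrac{4\alpha_B}{3\alpha_S}+\tfrac1{\alpha_S}=0$, i.e.\ $\alpha_B=\tfrac34$ independently of $\alpha_S$; substituting $\alpha_B=\tfrac34$ into the seller condition collapses it to $16\alpha_S^2-8\alpha_S-9=0$, whose roots are $\alpha_S=\tfrac{1\pm\sqrt{10}}{4}$. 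Discarding the negative root (asks must be nonnegative) leaves the claimed pair $\alpha_S=\tfrac{1+\sqrt{10}}{4}$, $\alpha_B=\tfrac34$.

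To upgrade ``stationary point'' to ``Nash equilibrium'' I would then (i) check that the relevant second derivatives $\partial^2U_{B1}/\partial\alpha_{B1}^2$ and $\partial^2U_S/\partial\alpha_S^2$ are negative at the candidate, so each agent's best--response problem is locally concave, and that no larger payoff is attained at an endpoint of the feasible range, so the critical point is the global maximiser; and (ii) record that Equation~\eqref{eqn:TBOSassumption} is exactly the hypothesis under which the integration limits written above are the stated ones, so the equilibrium is to be read as a best response \emph{within} the class of scaling strategies compatible with that hypothesis.

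I expect the case--bookkeeping to be the main obstacle: redoing buyer~$1$'s regions for a general $\alpha_{B1}\ne\alpha_B$, where the boundary $b_{B1}=b_{B2}$ becomes $\theta_{B2}=(\alpha_{B1}/\alpha_B)\theta_{B1}$ rather than $\theta_{B1}=\theta_{B2}$, and tracking the Leibniz boundary terms that appear because buyer~$1$'s surplus jumps across that boundary. Once the utility polynomials are established correctly, the remainder — forming the FOCs, observing their collapse under the $[0,1]$ supports, and solving the resulting quadratic — is routine algebra.
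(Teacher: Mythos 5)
Your proposal follows essentially the same route as the paper: the two-region decomposition of buyer~1's surplus, the four seller regions collapsing by symmetry to $2\int(u_{s1}+u_{s3})\,d\theta_S$, the first-order conditions \eqref{eqn:TBOS-Buyer1} and \eqref{eqn:TBOS-Seller}, the specialisation to $[0,1]$ supports giving $\alpha_B=\tfrac34$ and then $\alpha_S=\tfrac{1+\sqrt{10}}{4}$ after discarding the negative root. One remark: the refinement you flag as "the main obstacle" --- carrying a separate deviation scale $\alpha_{B1}\neq\alpha_B$ so that the buyer's first-order condition is a genuine unilateral best-response condition rather than a derivative of the symmetric utility with respect to the common $\alpha_B$ --- is a real issue, but the paper does not carry it out either (it differentiates $U_{B1}$ with both buyers' scale factors tied together); likewise the second-order/global-maximum check in your step~(i) is absent from the paper. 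So your plan, where executed, coincides with the paper's argument, and where it goes further it identifies genuine gaps that remain open in both treatments.
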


As seen, with the increase in just one buyer, the complexity of the solution increases. It becomes increasingly difficult to extend and generalize the above results for a realistic market setting. Thus, moving forward, taking the PowerTAC wholesale market as testbed, we present a bidding strategy and experimentally show that it follows the theoretical results obtained in this section.

\subsection{Best Response analysis in multi-unit Double Auctions with complete information}

In practice, there are key differences between double auctions implemented in markets, and the theoretical results arrived above, stated as follows:

\begin{enumerate}
    
    \item Quantity may be involved in the trading market auctions, which is not considered above.

    \item The seller needs to use the same bidding strategy for one to achieve the above result, which may not the case.
\end{enumerate}

So, considering a multi-unit double auction with $ACPR$, where bids are of the form $(quantity, price)$. Let $Q_a$ denote the quantity not cleared of the Last Cleared Ask if it is executed partially, and let $Q_b$ denote the quantity not cleared of the Last Cleared Bid if it is executed partially.

\begin{claim}
Upon clearance of an auction, either $Q_a$ or $Q_b$, or both have to be zero.
\end{claim}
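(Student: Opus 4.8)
The plan is to argue by contradiction, leaning on the defining feature of the double-auction allocation rule used here: the auctioneer matches the highest bids against the lowest asks and clears the maximal feasible volume, so that when the auction has cleared there is no pair consisting of a residual bid and a residual ask whose bid price is at least the ask price. I would state this maximality property explicitly at the outset, since it is the only nontrivial ingredient. Note also that for any such pair, \emph{ACPR} produces a clearing price $(b+s)/2$ lying between the two limit prices, so such a pair can always be traded at a mutually acceptable price.

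First I would fix notation. List the submitted bids in non-increasing order of price and the asks in non-decreasing order of price, and view each side as a ``stack'' of unit quantities carrying the limit price of the order they belong to. Let $V$ be the total volume traded; by conservation of quantity this is the same whether measured on the bid side or the ask side. Let $p_b$ be the price of the Last Cleared Bid and $p_a$ the price of the Last Cleared Ask, with respective quantities $q_b$ and $q_a$. The orders strictly above $p_b$ on the bid stack (resp. strictly below $p_a$ on the ask stack) are fully cleared, so the cleared bid volume is (total quantity priced $\ge p_b$) $- Q_b$ and the cleared ask volume is (total quantity priced $\le p_a$) $- Q_a$; in particular $Q_b \ge 0$ and $Q_a \ge 0$, and $Q_b > 0$ means $V$ is strictly below the total quantity priced at least $p_b$, similarly for $Q_a$. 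Second, I would record the marginal-compatibility inequality $p_b \ge p_a$: the Last Cleared Bid and Last Cleared Ask are a matched pair, and a bid is matched to an ask only if its price is at least the ask's price.

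The contradiction then follows quickly. Suppose $Q_b > 0$ and $Q_a > 0$. Pick any $\varepsilon$ with $0 < \varepsilon \le \min(Q_a, Q_b)$ (continuous quantities, so I use an arbitrarily small $\varepsilon$ rather than ``one more unit''). The next $\varepsilon$ units on the bid stack still carry price $p_b$, the next $\varepsilon$ units on the ask stack still carry price $p_a$, and $p_b \ge p_a$, so these $\varepsilon$ units could be traded at the \emph{ACPR} price $(p_b + p_a)/2$, strictly increasing the cleared volume and contradicting the maximality of $V$. Hence at least one of $Q_a, Q_b$ equals $0$; if neither order is partially executed both are $0$ by definition. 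This is exactly the claim. The step I expect to be the crux is making the maximality property precise and justifying that it is part of the clearing mechanism rather than an extra hypothesis; once that is pinned down, the remainder is a short accounting argument together with the one-line $p_b \ge p_a$ observation.
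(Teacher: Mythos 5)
Your proposal is correct and rests on exactly the same idea as the paper's proof: if both the last cleared bid and last cleared ask were only partially executed, additional quantity could be matched at a price between them (since the last bid's price is at least the last ask's price), contradicting the maximality of the cleared volume. You simply make explicit the maximality property and the $\varepsilon$-volume bookkeeping that the paper's one-line argument leaves implicit.
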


\begin{proof}
If the last bid partially clears, $Q_a$ $=$ $0$ and $Q_b$ $\neq$ $0$, and if the last ask partially clears, $Q_a$ $\neq$ $0$ and $Q_b$ $=$ $0$. If both clear fully, $Q_a$ $=$ $0$ and $Q_b$ $=$ $0$. The last bid and last ask both can not clear partially, as, if they did, then more quantity can be cleared with last bid's price higher than the last ask's price.
\end{proof}

Next, we propose the best response if all the other bids are known to the bidder (i.e. complete information).
\begin{proposition}
\label{prop:LCP}
When a buyer (seller) has complete information about the auction, and it desires to procure (sell) entire energy it bids (asks) for, it's a best response to bid as close as possible to the last clearing bid (ask).
\end{proposition}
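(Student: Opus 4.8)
The plan is to prove the claim by a case analysis on what happens to the buyer's own bid under the clearing mechanism, showing that any bid price strictly above the last clearing bid's price is (weakly) dominated by a bid placed just above it, and that a bid below it procures strictly less than the full desired quantity. First I would fix the buyer's desired quantity $q$ and let $p^\ast$ denote the last clearing bid's limit-price in the auction formed by all the \emph{other} participants' bids and asks together with the buyer's own bid. The key structural fact I would invoke is the Claim just proved: at clearance at most one of the last ask and last bid is partially executed, so the merit-order matching is essentially determined by sorting bids in decreasing price and asks in increasing price and walking inward until prices cross. I would then observe that for the buyer to have its entire quantity $q$ cleared, its bid must sit at or above the crossing point — i.e., its limit-price must be at least the last clearing ask's price; and conversely, submitting a price strictly below that crossing threshold leaves part (or all) of $q$ uncleared, contradicting the stated objective of procuring the entire energy.

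Next I would argue the ``as close as possible'' optimality direction. Among all bid prices that do secure the full quantity $q$, the payment under ACPR is $(b+s)/2$ where $b$ is the last executed bid and $s$ the last executed ask. I would split into two subcases. In the subcase where the buyer's own bid is itself the last executed bid, lowering $b$ toward the next-highest competing bid price (equivalently, toward the last clearing bid among the others) strictly decreases the clearing price while still clearing all of $q$, so the buyer is strictly better off bidding as low as possible subject to still being above the ask side — which is exactly ``as close as possible to the last clearing bid.'' In the subcase where some competing bid is the last executed bid, the buyer's payment is already pinned down by quantities it cannot influence, so any bid at or above that competing last-clearing price is payoff-equivalent, and again bidding essentially at that price is a best response (and bidding higher gains nothing while risking, in the incomplete-information reading, a worse outcome). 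Combining the two directions gives that bidding as close as possible to the last clearing bid — from above, just enough to still clear all of $q$ — weakly dominates every other bid and is therefore a best response. The seller's case is symmetric with bids/asks, buyer/seller, and ``lowest'' / ``highest'' interchanged, and $s$ raised toward the last clearing ask.

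I would present this cleanly by first defining the merit-order clearing precisely (decreasing bid prices, increasing ask prices, match until crossing), then stating a short lemma: ``the buyer clears its full quantity $q$ iff its bid price is at least the price at which the cumulative ask supply meets the cumulative bid demand excluding itself,'' and finally doing the two-subcase payment comparison. The ACPR payment formula $(b+s)/2$ and the Claim ($Q_a Q_b = 0$) are the only external facts needed.

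The main obstacle I anticipate is handling ties and partial executions carefully — specifically, pinning down exactly when the buyer's own bid becomes the last executed bid versus when a competitor's bid does, and making ``as close as possible'' well-defined when prices form a continuum (in which case the infimum is not attained and the statement is really about approaching the last clearing bid from above). I would deal with this either by treating prices as lying on a fine discrete grid (so ``as close as possible'' means ``one tick above''), which matches the PowerTAC setting, or by phrasing optimality as: for every $\varepsilon>0$ there is a bid within $\varepsilon$ of the last clearing bid that does at least as well as any other bid, and no bid does strictly better. A secondary subtlety is verifying that raising the bid above the last clearing bid cannot ever help — it cannot increase the cleared quantity beyond $q$ (the buyer only wants $q$) and under ACPR it can only weakly increase the clearing price when the buyer's bid is last-executed, so it is never strictly beneficial.
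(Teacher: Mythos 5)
Your proposal is correct and follows essentially the same route as the paper's proof: both walk through where the buyer's bid lands in the merit order, showing that any price below the threshold needed to clear the full quantity $q$ fails the procurement objective, while among prices that do clear $q$ in full the ACPR payment $(b+s)/2$ is weakly (and, when the buyer's bid is the last executed one, strictly) reduced by bidding as low as possible, i.e.\ just above the last clearing bid --- your feasibility-plus-monotonicity organization is simply a cleaner packaging of the paper's explicit three-case enumeration on $pb_m$ versus $pb_{c_1}$ with its subcases on full versus partial execution. The one slip to repair is in your proposed lemma: the threshold is not the crossing point of supply and demand \emph{excluding} the buyer, but the price at which the residual ask supply, net of all higher-priced competing bids, first covers the buyer's own quantity $q$ (your opening paragraph, which defines the last clearing bid in the auction \emph{including} the buyer's bid, has this right, and it is exactly what the paper's index conditions of the form $qb_m \leq \sum_{i} qa_i - qa_{c_2} + Q_a$ encode).
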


\begin{proof}

Let $b_i(pb_{i}, qb_{i})$ denote the $i^{th}$ bid placed in the auction, for $qb_{i}$ amount of energy, at $pb_{i}$ price. Similarly, let $a_i(pa_{i}, qa_{i})$ denote the $i^{th}$ ask placed in the auction, where $pa_{i}$ and $qa_{i}$ denote the asking price and quantity respectively. For simplicity, let us assume the ordering of bids to be in descending order of price, and ordering of asks to be in ascending order of price. Therefore, $pb_{i} > pb_{i+1}$, and $pa_{i}$ $<$ $pa_{i+1}$. The last clearing bid (LCB) is denoted by $b_{c_1}(pb_{c_1}, qb_{c_1})$, while the last clearing ask (LCA) is denoted by $a_{c_2}(pa_{c_2}, qa_{c_2})$. Thus, by ACPR, the clearing price is given as $CP = (pb_{c_1} + pa_{c_2})/2$. Let $Q_a$ denote the energy not cleared of the LCA if LCA is executed partially, and let $Q_b$ denote the energy not cleared of the LCB if LCB is executed partially.

WLOG, let us consider the case of bids in the auction. Our claim essentially solves the optimization problem of minimizing the clearing price while procuring the full amount of energy. Assume a buyer $m$ wants to place a bid $b_m(pb_m, qb_m)$ in such an auction. We define $Q_a$ and $Q_b$ denote the energy not cleared of last bid and last ask respectively, when the buyer $m$ doesn't participate. Now if the buyer does participate in the auction, there are the following possibilities (depicted in Figure \ref{fig:cases}):

\begin{figure}[!ht]
  \centering
  \includegraphics[]{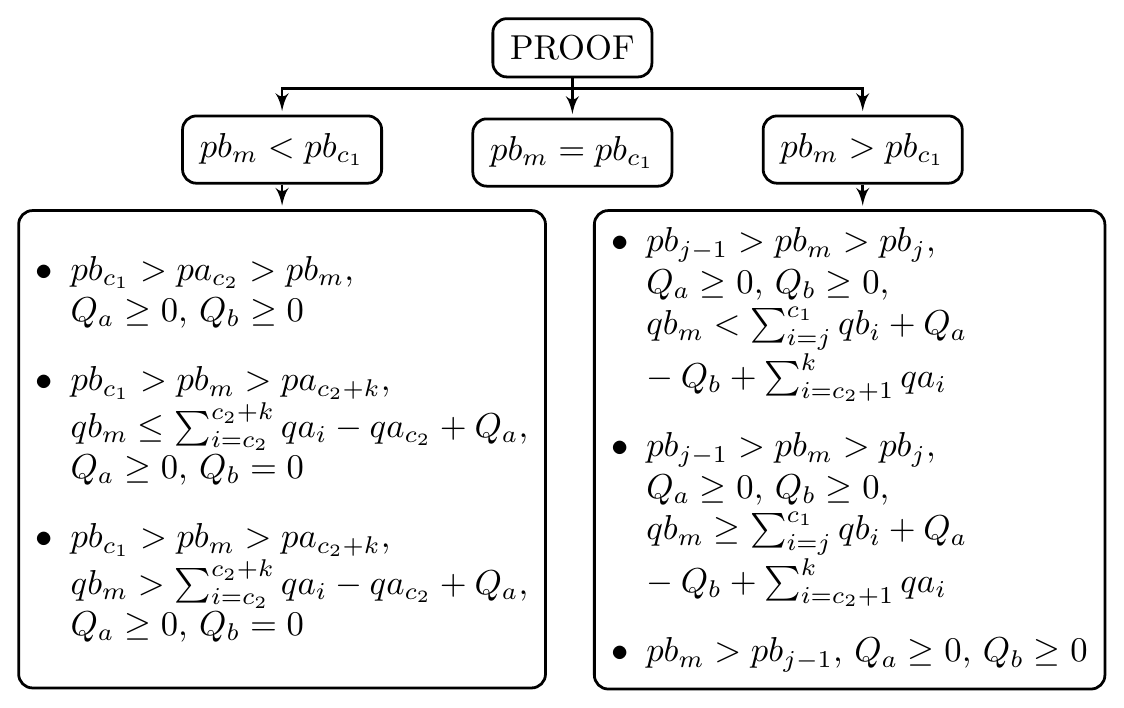}
  \caption{Proof Cases}
  \label{fig:cases}
\end{figure}
\begin{case}
$pb_m$ $<$ $pb_{c_1}$ i.e. bid price is lower than price of the would-be cleared bid when the buyer doesn't participate

\begin{enumerate}
\item $pb_{c_1} > pa_{c_2} > pb_m$ and $Q_a \geq 0$, $Q_b \geq 0$ i.e. if the bidding price of $m$ is lower than the ask price of the last cleared ask (when buyer doesn't participate). Under this condition, the bid doesn't clear, and this is clearly not optimal, as the buyer doesn't get it's required energy.

\item $pb_{c_1} > pb_m > pa_{c_2+k}$, $Q_a \geq 0$ and $Q_b = 0$ and $k$ is the smallest index with $k \geq 0$ such that $qb_m \leq \sum_{i=c_2}^{c_2+k}qa_i - qa_{c_2} + Q_a$, i.e. if the last cleared bid is fully executed when buyer doesn't participate, and buyer $m$'s bid price is higher than the next closest ask. In this case, clearly, $b_m$ becomes the last clearing bid, and it gets cleared fully. Thus, buyer $m$ gets energy at the lowest price possible by having the last cleared bid.

\item $pb_{c_1} > pb_m > pa_{c_2+k}$, where $Q_a \geq 0$ and $Q_b$ = $0$ and $k$ is the largest index with $k \geq 0$ such that $qb_m > \sum_{i=c_2}^{c_2+k}qa_i - qa_{c_2} + Q_a$, i.e. if the last cleared bid is fully executed when buyer doesn't participate, and buyer $m$'s bid price is higher than the next ask. In this case, clearly, $b_m$ becomes the last clearing bid, and it gets cleared partially. Although the buyer $m$ gets some energy at the lowest price possible by having the last cleared bid, it would've been better off bidding higher than the previous bid $b_c$, in order to clear it's entire bid energy $qb_m$.

\end{enumerate}
\end{case} 

\begin{case}
$pb_m$ $=$ $pb_{c_1}$ i.e. bidding price same as the last cleared bid's price (when buyer doesn't participate). This is a probability zero event, and extremely unlikely to occur. Since it's a tie, it'll either be treated as Case 1 or Case 3, depending on the tiebreaker rule set by the auction.
\end{case}

\begin{case}
$pb_m$ $>$ $pb_{c_1}$ i.e. bidding price is just higher than the last cleared bid's price (when buyer doesn't participate)

\begin{enumerate}

\item $pb_{j-1} > pb_m > pb_j$ and $Q_a \geq 0$, $Q_b \geq 0$, where $j$ is the largest index with $j \leq c_1$ such that $qb_m \leq \sum_{i = j}^{c_1} qb_i + Q_a - Q_b + \sum_{i = c_2 + 1}^{k}{qa_i}$ and $k$ is the largest index such that $pa_k$ $\leq$ $pb_{j}$. In this case, $b_m$ clears fully and becomes the second last bid to clear, and $b_j$ clears partially, or $b_m$ clears fully and becomes the last cleared bid. If the buyer decides to bid below $pb_j$, it'll clear partially, which is not desirable, and thus supports our claim.

\item $pb_{j-1} > pb_m > pb_j$ and $Q_a \geq 0$, $Q_b \geq 0$, where $j$ is the smallest index such that $qb_m > \sum_{i = j}^{c_1} qb_i + Q_a - Q_b + \sum_{i = c_2 + 1}^{k}{qa_i}$ and $k$ is the largest index such that $pa_k \leq pb_{j}$. In this case, $b_m$ becomes the last clearing bid, and executes partially. The buyer is better off bidding higher than $b_{j-1}$ as it would've cleared fully.

\item $pb_m > pb_{j-1}$ and $Q_a \geq 0$, $Q_b \geq 0$, where $j$ is the largest index such that $qb_m \leq \sum_{i = j}^{c_1} qb_i + Q_a - Q_b + \sum_{i = c_2 + 1}^{k}{qa_i}$ and $k$ is the largest index such that $pa_k$ $\leq$ $pb_{j}$. In this case, $b_m$ clears fully. But, the buyer $m$ would get the full amount of energy even if it bid between $pb_{j-1}$ and $pb_j$, and would've then become close to the clearing bid.

\end{enumerate}
\end{case}
\end{proof}

Given the above proposition in a complete information setting, we further propose a MDP-based bidding strategy, which uses the past auction trends and statistics, to achieve the best response with incomplete information in the PowerTAC wholesale market.

\section{MDPLCPBS: PowerTAC Wholesale Market Bidding Strategy}
\label{section:MDPLCPBS}

\sloppy We introduce the MDP and LCP based Bidding Strategy (MDPLCPBS) for the PowerTAC wholesale market. The PowerTAC wholesale market accepts bids of the form $(energy\text{ }amount, limit\text{-}price)$. With respect to a broker, let the energy amount being sold be positive, while the energy amount being bought be negative. Meanwhile, let negative price indicate a broker is earning revenue, while positive price indicate it is paying or losing revenue. Thus, from the viewpoint of a broker, a buy order is seen to have a negative $energy\text{ }amount$ and a positive $limit\text{-}price$, while a sell order (termed as an \emph{ask}), is seen to have a positive $energy\text{ }amount$ and a negative $limit\text{-}price$.

At timeslot $t$, assuming a broker has a predicted demand profile $D_t$ $=$ $\{d_{t + 1}, d_{t+2}, \ldots, d_{t+24}\}$, where $d_i$ is the predicted net demand at timeslot $i$. Also, let $P_t$ $=$ $\{p_{t+1}, p_{t+2}, \ldots, p_{t+24}\}$ denote the amount of energy already procured by past energy contracts, where $p_i$ denotes the energy procured for timeslot $i$. Thus, the remaining energy to be procured is given by $E_t$ $=$ $\{e_{t + 1}, e_{t + 2},\ldots, e_{t+24}\}$, where $e_i$ $=$ $d_i - p_i$ is the net energy left to be procured for timeslot $i$. The bidding strategy, MDPLCPBS, to procure the aforementioned energy requirements, comprises of three major submodules - (i) Limit Price Predictor, (ii) Quantity Predictor, and (iii) Last Cleared Price Predictor.

\subsection{Limit Price Predictor (LPP)} At any given timeslot $t$, the predictor computes 24 $limit\text{-}prices$ for 24 simultaneous PDAs in the PowerTAC wholesale market. Motivated by \cite{Tesauro} and \cite{urieli2014tactex}, the Limit Price Predictor uses the following MDP to place optimal $limit\text{-}prices$ for bids:

\begin{enumerate}
	\item \textbf{States}: $s \in S = \{0,1,\dots,24,success\}$, $s_0 := 24$
    
    \item \textbf{Actions}: $limit\text{-}price$ $\in \mathbb{R}$
    
    \item \textbf{Transition}: The same state transition from \cite{urieli2014tactex} is used. A state $s \in \{1,\ldots, 24\}$ transitions to one of two states. If a bid is partially or fully cleared, it transitions to the terminal state $success$. Otherwise, a state $s$ transitions to state $s - 1$. The clearing (i.e. transition) probability $p_{cleared}(s, limit\text{-}price)$ is initially unknown and is determined by Equation (\ref{eqn:pcleared}).
    
    \item \textbf{Reward}: At any state $s \in \{1,\dots,24\}$, the reward is $0$. At terminal state $s=0$, the reward is the negative of the \emph{balancing price} per unit energy. At terminal state $s = success$, the reward is the negative of the \emph{limit-price} of the cleared bid. Since we take the price to be positive for bids and negative for asks, maximizing reward results in minimizing costs.
    
    \item \textbf{Terminal States}: $\{0, success\}$
\end{enumerate}

We solve the above MDP using a sequential bidding strategy, that computes the optimal bid $limit\text{-}price$ that minimizes the expected procurement cost per unit energy. It uses the $balancing\text{-}price$ as the expected price at state $s = 0$, and recursively minimizes the expected cost by using the probability of clearance, $p_{cleared}(s, limit\text{-}price)$. This solution is summarized as a value function, stated as follows:
\begin{equation}
        V(s) = \begin{cases}
                    balancing\text{-}price, & \text{if}\ s = 0 \\
                    \displaystyle\min_{limit\text{-}price}\{p_{cleared} \times limit\text{-}price\\
                     + (1 - p_{cleared}) \times V(s-1)\}, & \text{if}\ s \in [1,24]
                \end{cases}
                \label{eqn:dprog}
\end{equation}
Given that the $balancing\text{-}price$ and the $p_{cleared}$ values are different for bids and asks, we maintain two separate instances of the MDP, and solve them independently.

The value function in Equation (\ref{eqn:dprog}) is solved recursively using dynamic programming. However, before doing so, the $balancing\text{-}price$ and the transition function $p_{cleared}(s, limit\text{-}price)$ need to be estimated, as they are both initially unknown. The $balancing\text{-}price$ is estimated by averaging the balancing-prices across past timeslots. On the other hand, the clearing probability, $p_{cleared}(s, limit\text{-}price)$, is computed using past auction statistics as:
{
\begin{equation}
        p_{cleared} = \dfrac{\sum_{ac \in auction[s], ac.\text{LCP} < limit\text{-}price} ac.\text{cleared-amount}}{\sum_{ac \in auction[s]} ac.\text{cleared-amount}} \label{eqn:pcleared}
    \end{equation}
}
where $auction[s]$ is the set of all past auctions in the state $s$, and LCP is the \emph{Last Clearing Price}, which is estimated by the \emph{Last Cleared Price Predictor}. The auction statistics for each state $s$ are re-used in the future for estimating $p_{cleared}$, as we iterate over the same sequence of states S during the bidding process.

\subsection{Quantity Predictor (QP)}
The \emph{Quantity Predictor} is primarily responsible for distributing the demand for a target timeslot across all the 24 auctions, in order to further reduce overall energy cost. The idea is to buy more and sell less at cheaper prices, and vice-versa. It essentially breaks down the demand for a target timeslot $t + 24$, across auctions in timeslots $\{t, t + 1, \ldots, t + 23\}$.

For each auction state $s \in \{1,\dots,24\}$ at timeslot $t$, it takes the corresponding energy requirement $e_{t + s}$ and uses the 24 \emph{limit-prices} from the \emph{Limit Price Predictor} to distribute the required energy. The energy quantity to bid/ask, for each state $s$ at timeslot $t$, is given by:
{
	\begin{equation}
        q(s) = \begin{cases}
                    \frac{e_{t+s}}{\sum_{j = s}^{24} \frac{limit\text{-}price[j]}{limit\text{-}price[s]}}, & \text{if}\ e_{t+s} > 0 \\
                    \frac{e_{t+s}}{\sum_{j = s}^{24} \frac{limit\text{-}price[s]}{limit\text{-}price[j]}}, & \text{if}\ e_{t+s} < 0 \\
                    0, & \text{if}\ e_{t+s} = 0
                \end{cases} \label{eqn:cases}
    \end{equation}
}
where $s \in \{1,\dots,24\}$, $limit\text{-}price[s]$ is the limit-price for state $s$ determined by the \emph{Limit Price Predictor}. The first case in Equation \eqref{eqn:cases} refers to the situation where energy needs to sold, so the bid quantity is directly proportional to the predicted limit-price of that auction - essentially selling more energy at higher price. On the other hand, the second case occurs when the energy needs to be procured. So, the bid quantity is set to be inversely proportional to the predicted limit-price i.e. buying more energy at cheaper price. Thus, the final bid is of the form $(q(s), limit\text{-}price[s])$.

\subsection{Last Cleared Price Predictor (LCPP)}
\label{section:LCPP}
First, one has to note that, in any auction, the LCP is greater than or equal to CP. Mostly, $LCP > CP$, as $P(LCP = CP) = 0$, i.e. LCP equal to CP is a probability zero event. In PowerTAC, the LCP is not known to any broker. In essence, one can place better bids if the LCP for each auction is known, as they can bid higher than a predicted LCP to become the last bid, and achieve best response according to Proposition \ref{prop:LCP}. The Last Cleared Price Predictor essentially tries to determine the LCP for bids and asks for all executed auctions. It does so by probing the auctions with a set of dummy orders, which have the minimum tradeable energy as quantity (0.01 MwH), and $limit\text{-}prices$ equally spaced in the range $[\beta \times limit\text{-}price, balancing\text{-}price]$. After execution, the LCP for bids for an auction in state $s$ is determined by:
{
    \begin{equation}
        LCP(s) = \min(dummy\text{-}bids_{cleared}, limit\text{-}price[s]_{cleared})
        \label{eqn:lcpbid}
    \end{equation}
}
where $dummy\text{-}bids_{cleared}$ is the set of bid prices of all dummy bids which got cleared in the state $s$, and $limit\text{-}price[s]_{cleared}$ is the limit-price for the cleared final bid made in state $s$ (taken to be infinity if final bid did not clear or does not exist). Similarly, the LCP for asks is given as:
{
    \begin{equation}
        LCP(s) = \max(dummy\text{-}asks_{cleared}, limit\text{-}price[s]_{cleared})
        \label{eqn:lcpask}
    \end{equation}
}
where $dummy\text{-}asks_{cleared}$ is the set of ask prices of all dummy asks which got cleared in the state $s$, and $limit\text{-}price[s]_{cleared}$ is the limit-price for the cleared final ask made in state $s$ (taken to be infinity if final ask did not clear or does not exist). These LCP values are then used to update the clearing probability $p_{cleared}$ in Equation \eqref{eqn:pcleared}.
\begin{algorithm}[!ht]
	\caption{MDPLCPBS}\label{MDPLCPBS}
	{
    \begin{algorithmic}[1]
    \Procedure{MDPLCPBS}{$energyReq[1..24]$}
       	\State $marketData[0..24] \gets getMarketStatistics()$
       	\If {$EnoughDataPoints(marketData)$}
       		\State {\scriptsize $bidPrices[1..24] \gets SolveMDP(marketData)$}
            \State {\scriptsize $bidQty[1..24] \gets SpreadQty (energyReq, bidPrices)$}
       	\Else
       		\State $bidPrices[1..24] \gets SampleBiddingPolicy()$
            \State $bidQty[1..24] \gets energyReq[1..24]$
       	\EndIf
        \State $sendBids(bidPrices, bidQty)$
        \State $sendDummyBids(bidPrices, marketData)$
    \EndProcedure
    \end{algorithmic}
    }
\end{algorithm}
Algorithm \ref{MDPLCPBS} summarizes MDPLCPBS, which is executed every timeslot. It takes the energy requirement for the 24 auctions as input. First it collects the market statistics, which includes the LCP estimate and clearing amount from previous timeslots, and the balancing price (line 2). If the number of data points is suitable enough (taken to be 24 in our implementation), it proceeds to solve the MDP and generates a set of prices to bid (line 4). Using these set of prices, and the energy requirements, it generates a set of quantities to bid (line 5). If data points are not enough, the bidding policy given in the PowerTAC \emph{sample-broker} is used to determine the bid prices (line 7), and the bid quantities are set as the full energy requirements (line 8). Using the determined bid prices and quantities, we place the actual bids (line 10), and a set of dummy bids in the market (line 11). The time complexity of Algorithm \ref{MDPLCPBS} comes out to be in the order of the number of past market data points.

\section{Experimental Analysis}
\begin{figure*}[!ht]
  \begin{subfigure}{.24\columnwidth}
    \centering
    \includegraphics[width=\linewidth]{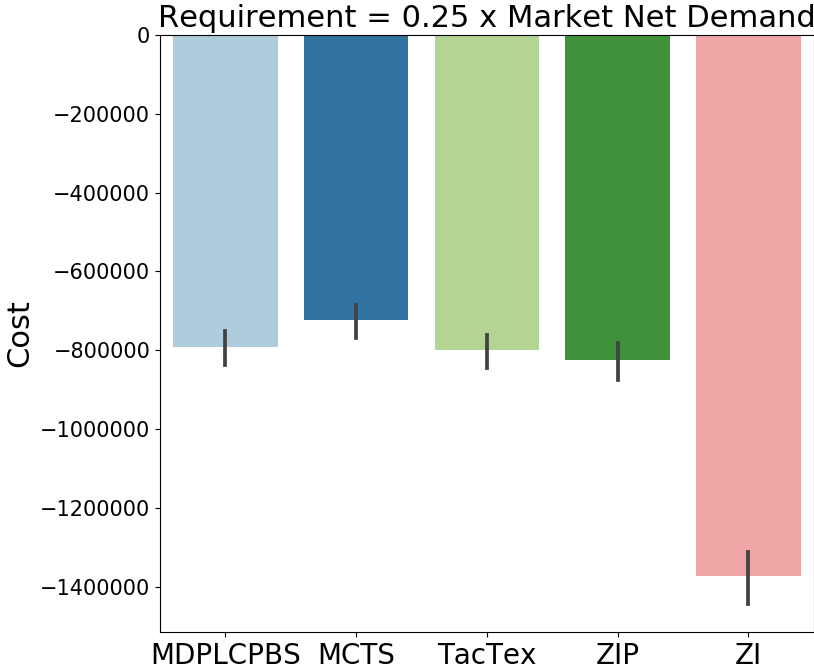}  
    \caption{}
    \label{fig:ND0.25}
  \end{subfigure}
  \begin{subfigure}{.24\columnwidth}
    \centering
    \includegraphics[width=\linewidth]{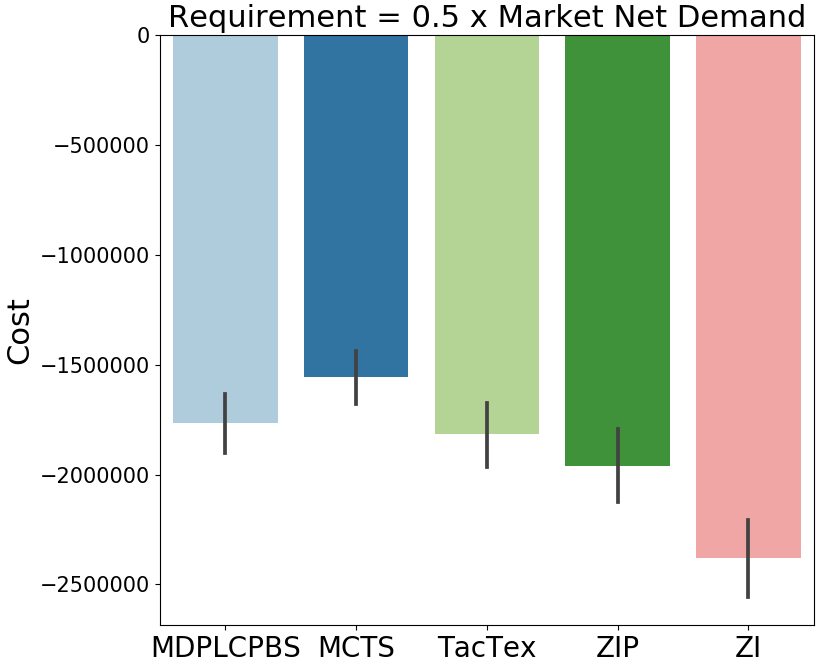}  
    \caption{}
    \label{fig:ND0.5}
  \end{subfigure}
  \begin{subfigure}{.24\columnwidth}
    \centering
    \includegraphics[width=\linewidth]{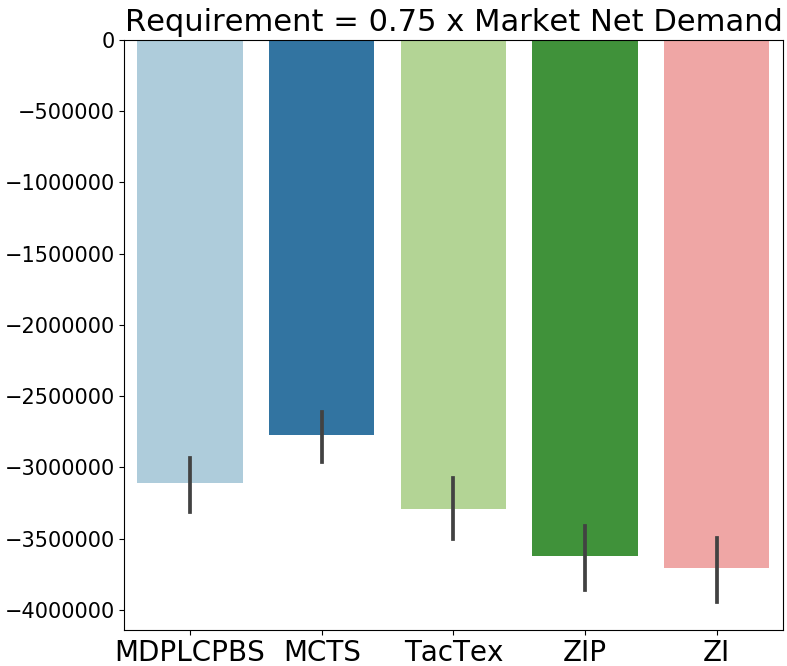}
    \caption{}
    \label{fig:ND0.75}
  \end{subfigure}
  \begin{subfigure}{.24\columnwidth}
    \centering
    \includegraphics[width=\linewidth]{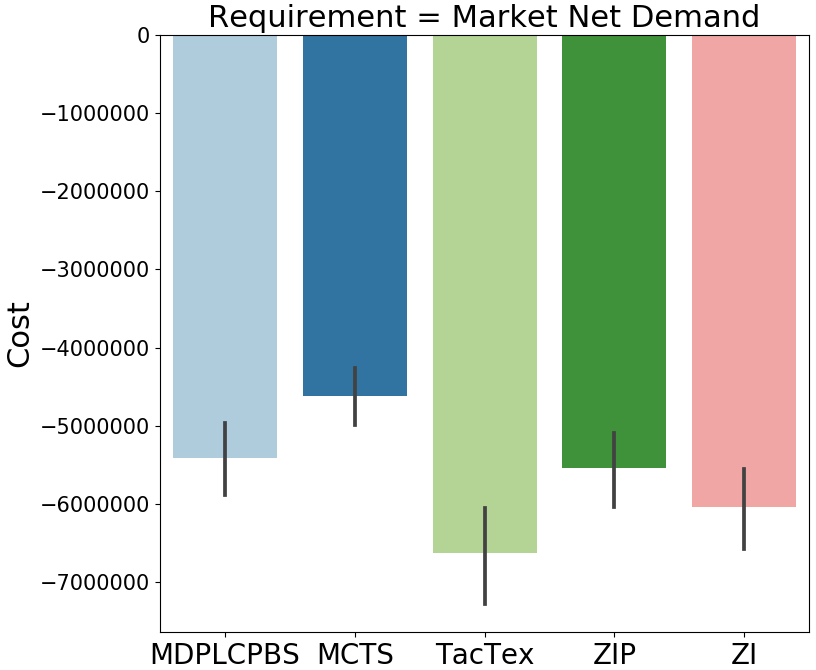}  
    \caption{}
    \label{fig:ND}
  \end{subfigure}
  \caption{Net cost comparison of strategies across games with different energy requirements}
  \label{fig:benchmarks}
\end{figure*}

We first analyze if our proposed bidding strategy, MDPLCPBS, follows the Nash Equilibrium derived above, and then benchmark it against the baseline and competing state-of-the-art strategies.

\subsection{Validation Experiments}


\begin{table*}[ht!]
    \centering
    \resizebox{\textwidth}{!}{
    \begin{tabular}{||c|c|c|c|c|c||c|c|c|c|c|c||} 
        \hline
         & \multicolumn{5}{|c||}{Fixed seller's scale factor and corresponding buyer's scale factor}
         & \multicolumn{5}{c||}{Fixed buyer's scale factor and corresponding seller's scale factor}\\
        \hline
        Scaling Factor
        & \multirow{2}{*}{\shortstack{0.948689 \\ (-0.1)}}
        & \multirow{2}{*}{\shortstack{0.998689 \\ (-0.05)}}
        & \multirow{2}{*}{\shortstack{\textbf{1.048689} \\ (Theoretical Value)}}
        & \multirow{2}{*}{\shortstack{ 1.098689 \\ (+0.05)}}
        & \multirow{2}{*}{\shortstack{ 1.148689 \\ (+0.1)}}
        & \multirow{2}{*}{\shortstack{ 0.791386 \\ (-0.1)}} 
        & \multirow{2}{*}{\shortstack{0.841386 \\ (-0.05)}} 
        & \multirow{2}{*}{\shortstack{\textbf{0.891386} \\ (Theoretical Value)}}
        & \multirow{2}{*}{\shortstack{ 0.941386 \\ (+0.05)}}
        & \multirow{2}{*}{\shortstack{0.991386 \\ (+0.1)}} \\ [0.5ex]
        \cline{1-1}
        Statistic & & & & & & & & & & \\
        \hline \hline
        Average & 0.772435 & 0.804782 & 0.838087 & 0.863553 & 0.907389 & 0.989438 & 1.057427 & 1.113121 & 1.226423 & 1.616557\\ [0.5ex]
        \hline
        \multicolumn{1}{||p{2cm}|}{\centering Std. Dev.} & 0.033287 & 0.037697 & 0.025127 & 0.020749 & 0.036637 & 0.033287 & 0.037697 & 0.025127 & 0.020749 & 0.036637\\ [0.5ex]
        \hline
    \end{tabular}}
    \caption{OBOS - Experimental scale factors values for buyer and seller using MDPLCPBS}
    \label{table:OBOSscalefactors}
\end{table*}

We take the Power TAC simulator and isolate the wholesale market, and remove all market simulator participants (GenCos, internal buyers) from the market. We test the one buyer one seller (OBOS) scenario by deploying only two agents in the isolated wholesale market - a buyer and a seller. These agents have a fixed energy demand that they need to buy (sell) from the market. In these experiments, we set the energy demand to be the previous slot's tariff market net demand, which both the buyer and seller are notified about. We draw $\theta_B\sim U[40, 80]$ and $\theta_S \sim U[40,80]$, and compute the theoretical scale factors using Equation {\eqref{eqn:OBOS-alphaB}} and Equation {\eqref{eqn:OBOS-alphaS}}. We run two batches of experiments, with 30 games in each set of the batch, for 5 sets per batch. During each batch, one of the agents has a fixed scaling based bidding strategy, while the other uses MDPLCPBS.

In the first (second) batch, we draw the seller's (buyer's) valuation $\theta_S \sim U[40,80]$ ($\theta_B\sim U[40, 80]$), and apply a fixed scale factor within $\pm0.1$ of the theoretical value. The buyer (seller) generates its valuation $\theta_B\sim U[40, 80]$ ($\theta_B\sim U[40, 80]$), and uses this valuation as the $balancing\text{-}price$ in MDPLCPBS to generate bids.

The experimental average scale factor and standard deviation, for cleared bids, for the buyer and the seller in the two batches of experiments are documented in Table {\ref{table:OBOSscalefactors}}. The table demonstrates that in a one buyer and one seller setting, MDPLCPBS approaches the Nash Equilibrium characterized for a single unit OBOS double auction. The values demonstrate that as the fixed scale factor for the seller is increased, the buyer's scale factor increases slowly. On the contrary, when the fixed scale factor for the buyer is increased, the seller's scale factor increases rapidly. 

\subsection{Benchmarks}
\begin{table}[b!]
    \centering
    \begin{tabular}{||c|c|c|c|c|c|c||} 
        \hline
        \multirow{2}{*}{\shortstack{\% of Market \\ Demand}}
        & \multirow{2}{*}{Statistic}
        & \multicolumn{5}{|c||}{State}\\
        \cline{3-7}
        & & 24 & 23 & 22 & 21 & 20 \\ [0.5ex]
        \hline\hline
        \multirow{3}{*}{100}
        & Wt. Avg. Relative Error (\%) & 9.35 & 10.4 & 7.59 & 6.66 & 8.34\\
        \cline{2-7}
        & Std. Dev. of \% Error & 13.47 & 26.76 & 21.00 & 21.38 & 44.71\\
        \cline{2-7}
        & Avg. Cleared Quantity & 885.76 & 34.87 & 23.08 & 17.17 & 14.36\\
        \hline
        \multirow{3}{*}{50}
        & Wt. Avg. Relative Error (\%) & 15.18 & 24.03 & 10.28 & 10.96 & 16.36\\
        \cline{2-7}
        & Std. Dev. of \% Error & 20.79 & 46.31 & 33.35 & 32.82 & 53.51\\
        \cline{2-7}
        & Avg. Cleared Quantity & 836.82 & 22.19 & 13.54 & 8.89 & 6.92\\
        \hline
    \end{tabular}
    \caption{Weighted relative error rate for LCP prediction}
    \label{table:lcperror}
\end{table}
We isolate the PowerTAC wholesale market from the full PowerTAC simulator while keeping the market simulator participants (GenCos, internal buyers) and weather simulator, and benchmark the performance of MDPLCPBS. The following agents/brokers are used in these benchmarks:
\begin{itemize}
\item Zero Intelligence (ZI): The ZI agent \cite{gode1993allocative} uses a randomized bid strategy and ignores the market state. It generates random order prices, ignoring the state of the market. In our experiments, we derive its bids from a uniform distribution with mean $\mu$ and a standard deviation of \$10. The mean $\mu$ taken from the limit price predicted by the MDP in TacTex \cite{urieli2014tactex}. The broker places one bid per auction, and the remaining required energy as the bid quantity. It continues to do the same for all the 24 bidding opportunities, or until the required energy is procured.
\item Zero Intelligence Plus (ZIP): The ZIP agent \cite{tesauro2001high} maintains a scalar variable $m$ denoting its desired profit margin, and it combines this with a unit's limit price to compute a bid price $p$. For each failed trade, the price is adjusted by small increments to beat the failed bid price $p$. In our experiments, the initial limit price value $\mu$ is determined from the limit price predicted by the MDP in TacTex. The profit margin $m$ is set to $1\%$ of $\mu$, resulting in the initial bid price to be $p = \mu \times 1.01$. If the bid fails, the next bid price is incremented by $10\%$ of $\mu$. Then, the new bid price is given by $p = \mu \times 1.11$.
\item TacTex: The TacTex \cite{urieli2014tactex} agent uses an MDP based model and dynamic programming to determine limit-prices for bids. The algorithm described in the paper was implemented and used in our experiments.
\item MCTS: The MCTS \cite{chowdhury2018bidding} agent uses a Monte Carlo Tree Search (MCTS) coupled with heuristics on top of the limit price derived from a REPTree based limit price predictor, to determine the optimal bid price. In our experiments, we used the MCTS-dyn-C2 version with 10000 iterations, which is shown to be the best performing variation of the MCTS bidding strategy.
\end{itemize}

For a timeslot $t + 24$ in the future, having 24 bidding opportunities in timeslots $\{t, t+1, \ldots, t+23\}$, the energy to be procured is set to be same across all the brokers. This energy amount for $t + 24$ is determined as some fraction of the net demand in timeslot $t$ in the PowerTAC simulation tariff market. Four sets of 10 games each are simulated, with each set having a different fraction of the net demand to be procured. The fraction set is given by $\{0.25, 0.5, 0.75, 1\}$.

Figure \ref{fig:benchmarks} shows the net cost of all the agents across the four sets of games. In each case, MDPLCPBS outperforms ZI, ZIP and TacTex on a consistent basis, while losing out to MCTS. While ZIP may seem to perform reasonably well in some cases, it can be countered easily in strategic settings, like in single-shot single-unit auction setting, whereas MDPLCPBS follows the equilibrium. It is also to be noted that, while MCTS uses tailored heuristics, MDPLCPBS is derived from the game theoretic analysis of single shot double auction (Proposition \ref{prop:LCP}). We leave the game theoretic analysis of MCTS for future work.

Table \ref{table:lcperror} summarizes the weighted relative error rates (weighed by cleared quantity) in predicting the LCP. 82\% of the total cleared energy for a future timeslot is cleared in the first auction itself, with 91\% of the total being cleared in the first five. Since the cleared energy of the other states is extremely low, their corresponding predicted LCPs have less impact on the $p_{cleared}$ calculation. Thus, we focus on the error rates of the states corresponding to the first five auctions for a target timeslot. We see that MDPLCPBS has a 10\% and 15\% error rate in the LCP prediction for the first auction of a timeslot (state 24), for 100\% and 50\% of the market demand as requirement. Moreover, the corresponding weighted average relative error across all states, comes out to be 12\% and 18\% respectively. The error rates increase as requirement decrease, as there are more low quantity bids by brokers and Miso buyer's low bids often set the LCP. Thus, MDPLCPBS predicts the LCP with minimal error, during auctions where most of the energy gets traded.


\section{Conclusion}
In this paper, we first analytically characterized Nash Equilibrium strategies for a single unit double auction with the clearing price and payment rule as \emph{ACPR}, for OBOS, and TBOS with scale based bidding strategies. We also proposed the best response in a complete information setting in a multi-unit double auction with \emph{ACPR}. Based on these formulations, we presented MDPLCPBS, a bidding strategy for PDAs. Furthermore, we experimentally validated that MDPLCPBS achieves the Nash Equilibrium derived for single unit double auction with \emph{ACPR} for OBOS. Finally, we benchmarked MDPLCPBS against the baseline and competing state-of-the-art strategies, and showed that it outperforms most of them consistently. Simultaneously, we showed that it predicts the LCP with minimal error.

\bibliography{references}

\begin{thebibliography}{}

\bibitem[\protect\citeauthoryear{Chatterjee and
  Samuelson}{1983}]{chatterjee1983bargaining}
Kalyan Chatterjee and William Samuelson.
\newblock Bargaining under incomplete information.
\newblock {\em Operations research}, 31(5):835--851, 1983.

\bibitem[\protect\citeauthoryear{Chowdhury \bgroup \em et al.\egroup
  }{2017}]{Chowdhury2017}
Moinul Morshed~Porag Chowdhury, Russell~Y. Folk, Ferdinando Fioretto,
  Christopher Kiekintveld, and William Yeoh.
\newblock Investigation of learning strategies for the {SPOT} broker in {P}ower
  {TAC}.
\newblock In Sofia Ceppi, Esther David, Chen Hajaj, Valentin Robu, and
  Ioannis~A. Vetsikas, editors, {\em Agent-Mediated Electronic Commerce.
  Designing Trading Strategies and Mechanisms for Electronic Markets}, pages
  96--111, Cham, 2017. Springer International Publishing.

\bibitem[\protect\citeauthoryear{Chowdhury \bgroup \em et al.\egroup
  }{2018}]{chowdhury2018bidding}
Moinul Morshed~Porag Chowdhury, Christopher Kiekintveld, Tran~Cao Son, and
  William Yeoh.
\newblock Bidding strategy for periodic double auctions using {M}onte {C}arlo
  tree search.
\newblock In {\em Proceedings of the 17th International Conference on
  Autonomous Agents and Multiagent Systems}, pages 1897--1899. International
  Foundation for Autonomous Agents and Multiagent Systems, 2018.

\bibitem[\protect\citeauthoryear{Chowdhury}{2016}]{chowdhury2016predicting}
Moinul Morshed~Porag Chowdhury.
\newblock Predicting prices in the {P}ower {TAC} wholesale energy market.
\newblock In {\em AAAI}, pages 4204--4205, 2016.

\bibitem[\protect\citeauthoryear{Ghosh \bgroup \em et al.\egroup
  }{2019a}]{passghosh}
Susobhan Ghosh, Kritika Prakash, Sanjay Chandlekar, Easwar Subramanian, Sanjay
  Bhat, Sujit Gujar, and Praveen Paruchuri.
\newblock Vidyutvanika: An autonomous broker agent for smart grid environment.
\newblock {\em Policy, Awareness, Sustainability and Systems (PASS) Workshop},
  07 2019.

\bibitem[\protect\citeauthoryear{Ghosh \bgroup \em et al.\egroup
  }{2019b}]{Ghosh2019}
Susobhan Ghosh, Easwar Subramanian, Sanjay~P. Bhat, Sujit Gujar, and Praveen
  Paruchuri.
\newblock {VidyutVanika}: A reinforcement learning based broker agent for a
  power trading competition.
\newblock {\em Proceedings of the {AAAI} Conference on Artificial
  Intelligence}, 33:914--921, July 2019.

\bibitem[\protect\citeauthoryear{Gode and Sunder}{1993}]{gode1993allocative}
Dhananjay~K Gode and Shyam Sunder.
\newblock Allocative efficiency of markets with zero-intelligence traders:
  Market as a partial substitute for individual rationality.
\newblock {\em Journal of political economy}, 101(1):119--137, 1993.

\bibitem[\protect\citeauthoryear{Ketter \bgroup \em et al.\egroup
  }{2017}]{ketter20172018}
Wolfgang Ketter, John Collins, and Mathijs Weerdt.
\newblock The 2018 power trading agent competition, 2017.

\bibitem[\protect\citeauthoryear{Kuate \bgroup \em et al.\egroup
  }{2013}]{kuate2013intelligent}
Rodrigue~Talla Kuate, Minghua He, Maria Chli, and Hai~H Wang.
\newblock An intelligent broker agent for energy trading: an mdp approach.
\newblock In {\em Twenty-Third International Joint Conference on Artificial
  Intelligence}, 2013.

\bibitem[\protect\citeauthoryear{Narahari}{2014}]{narahari2014game}
Yadati Narahari.
\newblock {\em Game theory and mechanism design}, volume~4.
\newblock World Scientific, 2014.

\bibitem[\protect\citeauthoryear{{Nord Pool AS}}{2019}]{nordpool}
{Nord Pool AS}.
\newblock {Nord Pool key statistics - July 2019}.
\newblock
  \url{https://www.nordpoolgroup.com/message-center-container/newsroom/exchange-message-list/2019/q3/nord-pool-key-statistics---july-2019/},
  2019.
\newblock [Online; accessed 10-August-2019].

\bibitem[\protect\citeauthoryear{{\"O}zdemir and
  Unland}{2015}]{ozdemir2015agentude}
Serkan {\"O}zdemir and Rainer Unland.
\newblock Autonomous power trading approaches of a winner broker.
\newblock In {\em AAMAS Workshop on Agent-Mediated Electronic Commerce and
  Trading Agents Design and Analysis (AMEC/TADA 2015)}, pages 143--156.
  Springer, 2015.

\bibitem[\protect\citeauthoryear{Parsons \bgroup \em et al.\egroup
  }{2011}]{parsons2011auctions}
Simon Parsons, Juan~A Rodriguez-Aguilar, and Mark Klein.
\newblock Auctions and bidding: A guide for computer scientists.
\newblock {\em ACM Computing Surveys (CSUR)}, 43(2):10, 2011.

\bibitem[\protect\citeauthoryear{Puterman}{1994}]{puterman1994markov}
Martin~L Puterman.
\newblock Markov decision processes.
\newblock {\em Wiley and Sons}, 1994.

\bibitem[\protect\citeauthoryear{Rothkopf}{1980}]{rothkopf1980equilibrium}
Michael~H Rothkopf.
\newblock Equilibrium linear bidding strategies.
\newblock {\em Operations Research}, 28(3-part-i):576--583, 1980.

\bibitem[\protect\citeauthoryear{Satterthwaite and
  Williams}{1989}]{satterthwaite1989bilateral}
Mark~A Satterthwaite and Steven~R Williams.
\newblock Bilateral trade with the sealed bid k-double auction: Existence and
  efficiency.
\newblock {\em Journal of Economic Theory}, 48(1):107--133, 1989.

\bibitem[\protect\citeauthoryear{Shi \bgroup \em et al.\egroup
  }{2010}]{shi2010equilibrium}
Bing Shi, Enrico Gerding, Perukrishnen Vytelingum, and Nick Jennings.
\newblock An equilibrium analysis of competing double auction marketplaces
  using fictitious play.
\newblock In {\em 19th European Conference on Artificial Intelligence (ECAI)
  (01/08/10)}, pages 575--580, August 2010.

\bibitem[\protect\citeauthoryear{Tesauro and Bredin}{2002}]{Tesauro}
Gerald Tesauro and Jonathan~L. Bredin.
\newblock Strategic sequential bidding in auctions using dynamic programming.
\newblock In {\em Proceedings of the First International Joint Conference on
  Autonomous Agents and Multiagent Systems: Part 2}, AAMAS '02, pages 591--598,
  New York, NY, USA, 2002. ACM.

\bibitem[\protect\citeauthoryear{Tesauro and Das}{2001}]{tesauro2001high}
Gerald Tesauro and Rajarshi Das.
\newblock High-performance bidding agents for the continuous double auction.
\newblock In {\em Proceedings of the 3rd ACM conference on Electronic
  Commerce}, pages 206--209. ACM, 2001.

\bibitem[\protect\citeauthoryear{Urieli and Stone}{2014}]{urieli2014tactex}
Daniel Urieli and Peter Stone.
\newblock {T}ac{T}ex'13: {A} champion adaptive power trading agent.
\newblock In {\em Proceedings of the Twenty-Eighth AAAI Conference on
  Artificial Intelligence}, pages 465--471. Association for the Advancement of
  Artificial Intelligence, 2014.

\bibitem[\protect\citeauthoryear{Urieli and Stone}{2016a}]{urieli2016tactex}
Daniel Urieli and Peter Stone.
\newblock Autonomous electricity trading using time-of-use tariffs in a
  competitive market.
\newblock In {\em Proceedings of the Thirtieth AAAI Conference on Artificial
  Intelligence (AAAI-16)}. Association for the Advancement of Artificial
  Intelligence, 2016.

\bibitem[\protect\citeauthoryear{Urieli and Stone}{2016b}]{urieli2016mdp}
Daniel Urieli and Peter Stone.
\newblock An {MDP}-based winning approach to autonomous power trading:
  formalization and empirical analysis.
\newblock In {\em Proceedings of the 2016 International Conference on
  Autonomous Agents \& Multiagent Systems}, pages 827--835. International
  Foundation for Autonomous Agents and Multiagent Systems, 2016.

\bibitem[\protect\citeauthoryear{Vincent}{1995}]{vincent1995bidding}
Daniel~R Vincent.
\newblock Bidding off the wall: Why reserve prices may be kept secret.
\newblock {\em Journal of Economic theory}, 65(2):575--584, 1995.

\bibitem[\protect\citeauthoryear{Wah \bgroup \em et al.\egroup
  }{2016}]{wah2016strategic}
Elaine Wah, Dylan Hurd, and Michael~P Wellman.
\newblock Strategic market choice: Frequent call markets vs. continuous double
  auctions for fast and slow traders.
\newblock {\em EAI Endorsed Trans. Serious Games}, 3(10):e1, 2016.

\bibitem[\protect\citeauthoryear{Wilson}{1992}]{wilson1992strategic}
Robert Wilson.
\newblock Strategic analysis of auctions.
\newblock {\em Handbook of game theory with economic applications}, 1:227--279,
  1992.

\bibitem[\protect\citeauthoryear{Wurman \bgroup \em et al.\egroup
  }{1998}]{wurman1998flexible}
Peter~R Wurman, William~E Walsh, and Michael~P Wellman.
\newblock Flexible double auctions for electronic commerce: Theory and
  implementation.
\newblock {\em Decision Support Systems}, 24(1):17--27, 1998.

\end{thebibliography}
\bibliographystyle{named}

\end{document}